\theoremstyle{plain}
\numberwithin{obs}{section}
\definecolor{ms}{rgb}{0,.4,1}
\newcommand{\ms}[1]{{\color{black}#1}}
\definecolor{henrik}{rgb}{1,.4,0}
\definecolor{jacopino}{rgb}{0.65,0.42,1}
\newcommand{\Tr}[1]{\mathrm{Tr}\left[ #1\right]} 
\newcommand{\bigo}[1]{\mathcal{O}\left (#1\right)}
\newcommand{\rom}[1]{\uppercase\expandafter{\romannumeral #1\relax}}
\newcommand{\sqrbra}[1]{\left[ #1\right]}
\newcommand{\J}{\mathbb{J}}
\newcommand{\de}{{\rm d}}
\newcommand{\RR}{\mathbb{R}}
\newcommand{\LL}{\mathbb{L}}
\newcommand{\TrR}[2]{\mathrm{Tr}_{#1}\left[ #2\right]}
\newcommand{\idO}{\mathds{I}}
\newcommand{\norbraB}[1]{\boldsymbol{(}#1\boldsymbol{)}}
\newtheorem{theorem}{Theorem}
\newtheorem{corollary}{Corollary}[theorem]
\theoremstyle{definition}
\newtheorem*{theorem*}{Theorem}
\newtheorem*{corollary*}{Corollary}
\definecolor{my-green}{RGB}{0,144,81}
\definecolor{my-red}{RGB}{255,113,91}
\begin{document}
	
	\title{ 
		Physicality of evolution and statistical contractivity are equivalent notions of maps
	}
    \author{Matteo Scandi} \email{matteo@ifisc.uib-csic.es}
    \affiliation{Institute for Cross-Disciplinary Physics and Complex Systems (IFISC) UIB-CSIC, Campus Universitat Illes Balears,E-07122 Palma de Mallorca, Spain}
	\author{Paolo Abiuso} \email{paolo.abiuso@oeaw.ac.at}
    \affiliation{Institute for Quantum Optics and Quantum Information - IQOQI Vienna, Austrian Academy of Sciences, Boltzmanngasse 3, A-1090 Vienna, Austria}
	\author{Dario De Santis}\email{dario.desantis@sns.it}
    \affiliation{Scuola Normale Superiore, Piazza dei Cavalieri 7, I-56126 Pisa, Italy}
	\author{Jacopo Surace} \email{jsurace@perimeterinstitute.ca}
    \affiliation{Perimeter Institute for Theoretical Physics, 31 Caroline Street North, Waterloo, ON N2L 2Y5, Canada}

	\date{\today}
	
	\begin{abstract}
		Statistical quantifiers are generically required to contract under physical maps, following the intuition that information should be lost under noisy transformations. This principle is so important in statistics that it even allows to derive uniqueness results based on it:
		the Chentsov-Petz theorem identifies the Fisher information metrics as the only family, on the space of probability distributions (or density matrices), that contracts under physical maps. This construction could suggest that statistical quantifiers are a derived concept, while the only fundamental object are the physical evolutions. The aim of this work is to disproof this belief. To this end, we prove a statement dual to the Chentsov-Petz theorem, showing that among all possible linear maps, the only ones that contract the Fisher information are exactly the physical ones. This result proves that, contrary to the common opinion, there is no fundamental hierarchy between physical maps and canonical statistical quantifiers, as either of them can be defined in terms of the other.
	\end{abstract}

	\maketitle
	
	\section{Introduction}
	
        The result that gave birth to information theory, Shannon's noiseless coding theorem, answers the question of what is the maximum rate at which a message can be transmitted over an ideal transmission line~\cite{Shannon1948,Holevo2013}. This can be phrased in terms of the relative entropy, defined for two discrete probability vectors $\boldsymbol{p}$ and $\boldsymbol{q}$ as:
	\begin{align}
		D(\boldsymbol{p}||\boldsymbol{q}) := \sum_{i}\; p_i\,\log \frac{p_i}{q_i}\,,\label{eq:relEnt}
	\end{align}
        where $p_i/q_i$ are the components of the two vectors. Then, the minimum compression is connected to the statistical difference, as measured by $D(\boldsymbol{p}||\boldsymbol{q})$, between the code probability distribution and the uniform one. Thanks to the operational interpretation of the relative entropy as the asymptotic 
        difference between the two log-likelihood~\cite{kullbackInformationTheoryStatistics1997}, this result simply means that in order for a probabilistic code to work, it needs to generate correlations significantly different from random bits.

    Over time, information theory evolved from its role in telecommunication to become one of the cornerstones of the modern understanding of physics. Indeed, apart from its decisive role in the treatment of statistical mechanics~\cite{jaynes1957information,leff2014maxwell,parrondo_thermodynamics_2015}, its extension to the quantum regime gave rise to an entire new field of physics~\cite{nielsen2002quantum}. Today, the concept of information plays a foundational role very similar to the one of energy  for classical mechanics. For example, time-reversal symmetry follows from the requirement that the global evolution should conserve information, constraining the dynamics, thanks to Wigner's theorem~\cite{Holevo2013}, to unitaries or antiunitaries transformations.
	Moreover, much in the same way in which one reconciles the existence of dissipative dynamics with the principle of conservation of energy, we explain the existence of evolutions for which information appears to be lost by postulating that the missing information is simply transferred to some environmental degrees of freedom which we do not have access to.  Following this argument, the most generic quantum mechanical evolution $\Phi$ takes the form:
	\begin{align}
		\Phi(\rho_S) = \TrR{E}{U_{SE}\,(\rho_S\otimes\omega_E) \,U_{SE}^\dagger}\,,\label{eq:CPTPdef}
	\end{align}
	where $\rho_S$ is the state of the system, $\omega_E$ is a generic environmental state and $U_{SE}$ is the global unitary/antiunitary evolution. 
Moreover, requiring the compatibility with composition of different systems, further restricts the set of global transformations $U_{SE}$ to unitaries only~\cite{zyczkowski2004duality,chiribella2021symmetries}. The class of maps defined by Eq.~\eqref{eq:CPTPdef}, with $U_{SE}$ restricted to be unitary,  is characterised by Stinespring theorem~\cite{stinespring1955positive} as the set of  Completely Positive Trace-Preserving maps (CPTP). 
When the same reasoning is applied to classical probabilities, one obtains the set of stochastic maps, since these can always be decomposed as:
		\begin{align}
			[T(\boldsymbol{p})]_i=\sum_{j,k,l} U_{ij}^{kl}p_k\omega_l\;,
		\end{align}
		where $\boldsymbol{p}$  represents the classical probability distribution of the system, $\boldsymbol{\omega}$ the one of the environment, while $U$ is a global isometry. 
	
		Despite their different origins, statistical quantifiers and dynamical maps are deeply connected. For one thing, the relative entropy in Eq.~\eqref{eq:relEnt} (as well as its quantum counterpart $H_L(\rho||\sigma):=\Tr{\rho(\log\rho-\log\sigma)}$)  
        contracts under physical evolutions:
		\begin{align}
			D(\boldsymbol{p}||\boldsymbol{q}) \geq D(T(\boldsymbol{p})||T(\boldsymbol{q})) \,,\label{contrD}
		\end{align}
		for any stochastic transformation $T$. Similarly, any sensible generalisation of the relative entropy needs to satisfy a condition akin to Eq.~\eqref{contrD}, 
		corresponding to the intuition that information cannot increase during noisy transformations. This connection becomes even more compelling when one looks at the local behaviour of statistical quantifiers:  a renown theorem by Chentsov establishes that there exists a unique metric on the space of classical probability distributions  that is contractive under any stochastic (i.e., physical) map~\cite{cencovStatisticalDecisionRules2000,campbell1986extended}. This takes the name of Fisher information metric, or Fisher-Rao metric, and it is \ms{the main object of interest of this work}. In this way, dynamical considerations constrain the local behaviour of any possible statistical quantifier.
		The deep connection between dynamics and Fisher information~\cite{abiuso2023characterizing,scandi2023quantum} has been mostly overlooked in the literature, as more operative uses were explored~\cite{Cramer46, Rao92,ly_tutorial_2017}. 
  
        Still, there is another path to the Fisher metric that we want to highlight here. In an effort to gather the many possible statistical quantifiers present in the literature, Csiszár introduced the concept of \ms{generalised information divergences~\footnote{Also known as $f$-divergences.}}, 
        a continuous family generalising many of the main properties of the relative entropy~\cite{csiszarInformationTheoryStatistics2004}. Interestingly though, despite the wide range of possible definitions, it was noticed that for close-by probabilities (in the sense that $\boldsymbol{q} = \boldsymbol{p}+\de \boldsymbol{q}$ for some $||\de \boldsymbol{q}||\ll 1$) all the 
        \ms{Csiszár}-divergences reduce to the Fisher information. Indeed, if we take for example the relative entropy in Eq.~\eqref{eq:relEnt} it is a matter of Taylor-expanding the logarithm to see that:
	\begin{align}
		D(\boldsymbol{p}||\boldsymbol{p}+\de \boldsymbol{q}) \simeq \sum_i\;\frac{\de q_i^2}{2p_i} = \frac{1}{2}\sum_{i,j}\;\de q_i\,\eta_{i,j}\big|_{\boldsymbol{p}}\,\de q_j\,,
		\label{eq:cl_fi_me}
	\end{align}
	\ms{where  $\eta_{i,j}\big|_{\boldsymbol{p}} := \frac{\delta_{i,j}}{p_i}$ is exactly the Fisher metric, and we do not consider} higher order corrections in $\de \boldsymbol{q}$. This shows that the Fisher information metric could also be defined in purely statistical terms, that is, independently of any dynamical considerations.
	
	Indeed, even if one might think that there is a hierarchy between physical maps and Fisher information (and, therefore, statistical quantifiers), in which the latter derives from the first, we show in this work that the situation is actually more intertwined: 
    we prove that physical evolutions can be defined solely in terms of the Fisher information metric. Before doing this, though, we need to introduce the concept of contrast functions and Fisher information metrics for quantum states.
    
	\section{Quantum Fisher Information metrics}
	In this section we give a brief overview of the quantum Fisher information metrics, presenting in particular their statistical derivation (Thm.~\ref{cf:thm:Ruskai}) and the dynamical one (Thm.~\ref{cf:thm:Petz}). \ms{For the sake of brevity,} we will only outline the main definitions, and we refer the reader to the review in~\cite{scandi2023quantum} for further details.
	
	Let us then introduce the concept of contrast functions $H(\rho||\sigma)$ between two quantum states $\rho$ and $\sigma$. This notion was first introduced for classical statistics by Csiszár~\cite{csiszarInformationTheoryStatistics2004}, and later extended by Petz to the quantum regime~\cite{petz1986quasi}, following an effort to give an axiomatic foundation to the many different divergences present in the literature. In particular, the principles proposed are the following:
    \ms{
    \emph{1) positivity and faithfulness}:  
    $H(\rho||\sigma)\geq0$, with equality \emph{iff} $\rho\equiv\sigma$;
    \emph{2) joint convexity}: mixing states should not increase their information content, therefore ${H(\lambda \rho_1 +(1-\lambda )\rho_2||\lambda \sigma_1 +(1-\lambda )\sigma_2)\;
        \leq}\allowbreak
        {\lambda H(\rho_1 ||\sigma_1 ) + (1-\lambda) H(\rho_2 ||\sigma_2 )}$,
		for $0\leq\lambda\leq1$;
   \emph{3) monotonicity}: an analogous condition for compatibility with physical evolutions $H(\Phi(\rho)||\Phi(\sigma))\leq H(\rho||\sigma)$;
   \emph{4) differentiability}: $C^\infty$-regularity in the arguments of $H$.
    }
	These conditions are still not enough to single out a unique family of statistical quantifiers, and for this reason it is customary to also propose an ansatz \ms{for $H(\rho||\sigma)$}, which encompasses most of the examples defined in the literature. In particular, the one proposed in~\cite{petz1986quasi} reads as follows:
	\begin{align}\label{cf:eq:HGexpressionApp}
		H_g (\rho||\sigma) := \Tr{ g\boldsymbol{(}\LL_\sigma\RR_\rho^{-1}\boldsymbol{)}\sqrbra{\rho}}\,,
	\end{align}
	where $g:\RR^+\rightarrow\RR$ is a matrix convex function~\cite{hiaiIntroductionMatrixAnalysis2014}, and $\LL_\rho$ and $\RR_\rho$ are the left and right multiplication operators, acting as:
	\begin{align}
		\LL_\rho[A] = \rho\,A\,,\qquad\qquad
		\RR_\rho[A] = A\,\rho\,.
	\end{align}
	It should be noticed that despite being a very large family, containing the relative entropy (when $g(x)=-\log x$), and functionals related to the Rényi divergences, the definition in Eq.~\eqref{cf:eq:HGexpressionApp} is still not enough to completely exhaust all the possibilities present in the literature~\cite{scandi2023quantum}. Still, it is enough to define the entire family of Fisher information metrics. Indeed, we have the following:
	\begin{theorem}[Lesniewski, Ruskai~\cite{lesniewskiMonotoneRiemannianMetrics1999}]\label{cf:thm:Ruskai} The contrast functions $H_g(\pi + \varepsilon A || \pi+ \varepsilon B) $ can be locally approximated up to corrections of order $\bigo{\varepsilon^3}$ as:
		\begin{align}
			H_g(\pi + \varepsilon A || \pi+ \varepsilon B)  \simeq\frac{\varepsilon^2}{2}\, \Tr{(A-B)\,\J_f^{-1}\big|_{\pi} [(A-B)]}\,,\label{cf:eq:thm2HGexp}
		\end{align}
		where $A$ and $B$ are traceless, Hermitian perturbations, and the superoperator $\J_f\big|_\pi $, called the quantum Fisher operator, is defined as:
		\begin{align}
			\J_f\big|_\pi := \RR_\pi \,f\norbraB{\LL_\pi\RR_\pi^{-1}}\,.\label{cf:eq:65}
		\end{align}
		Moreover, $f$ is an operator monotone function connected to  $g$ by the equation:
		\begin{align}\label{cf:eq:correspondenceFG}
			f(x) = \frac{(x-1)^2}{g(x) + x \, g(x^{-1})}\,.
		\end{align}
	\end{theorem}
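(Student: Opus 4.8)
\medskip\noindent\textit{Proof strategy.}
The plan is to Taylor-expand the defining trace formula \eqref{cf:eq:HGexpressionApp} for $H_g$ to second order in $\varepsilon$. Write $\rho=\pi+\varepsilon A$, $\sigma=\pi+\varepsilon B$, $C:=A-B$, and $\Delta:=\LL_\sigma\RR_\rho^{-1}$. Since $g(1)=0$ for any contrast function, the order-$\bigo{1}$ term $g(1)\Tr{\pi}$ is absent; and because of the exact identity $\Delta[\rho]=\sigma$, the affine part of $g$ contributes only $g'(1)\Tr{\sigma-\rho}=0$ by trace preservation, to all orders. Hence only the ``curvature'' of $g$ matters, and I would exploit the integral representation of operator convex functions, $g(x)=g'(1)(x-1)+\int_{[0,\infty]}\tfrac{(x-1)^2}{x+s}\,\rmd\mu(s)$ with $\mu\geq 0$, to reduce the problem to the elementary building blocks $g_s(x)=\tfrac{(x-1)^2}{x+s}$, namely the functionals $\Tr{(\Delta-\id)^2(\Delta+s)^{-1}[\rho]}$.

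For a fixed $s$, the second-order expansion of $\Tr{(\Delta-\id)^2(\Delta+s)^{-1}[\rho]}$ is best extracted via a Sylvester equation: setting $N:=(\LL_\sigma+s\RR_\rho)^{-1}[\sigma-\rho]$ (which is well-defined because left and right multiplications commute) one has $(\Delta+s)^{-1}[\sigma-\rho]=N\rho$, and a two-line manipulation using $\sigma N+sN\rho=\sigma-\rho$ and $\Tr{\sigma-\rho}=0$ yields $\Tr{(\Delta-\id)^2(\Delta+s)^{-1}[\rho]}=-(s+1)\Tr{N\rho}$. Expanding $N=\varepsilon N_1+\varepsilon^2 N_2+\bigo{\varepsilon^3}$, the first-order equation $\pi N_1+sN_1\pi=-C$ is diagonal in the eigenbasis $\{p_i\}$ of $\pi$, so $(N_1)_{ij}=-C_{ij}/(p_i+sp_j)$; taking traces of the first- and second-order equations then collapses the $\varepsilon^2$ coefficient to $-\Tr{N_1 C}=\sum_{ij}|C_{ij}|^2/(p_i+sp_j)$. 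That the answer depends on $A,B$ only through $C=A-B$ is, incidentally, forced a priori: by positivity of $H_g$ the map $\varepsilon\mapsto H_g(\pi+\varepsilon A||\pi+\varepsilon B)$ has a minimum equal to $0$ at $\varepsilon=0$ and vanishes identically when $A=B$, so its second-order form is positive semidefinite and annihilates the diagonal $\{A=B\}$.

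Integrating over $\mu$ (this interchange is the delicate point, see below), the $\varepsilon^2$ coefficient of $H_g(\pi+\varepsilon A||\pi+\varepsilon B)$ equals $\sum_{ij}|C_{ij}|^2\int\tfrac{\rmd\mu(s)}{p_i+sp_j}$. The final task is to match this with $\tfrac12\Tr{C\,\J_f^{-1}\big|_\pi[C]}=\tfrac12\sum_{ij}|C_{ij}|^2/(p_jf(p_i/p_j))$. Symmetrising the left sum under $i\leftrightarrow j$ (legitimate because $|C_{ij}|=|C_{ji}|$), the equality reduces to the scalar identity $1/f(x)=\int\big(\tfrac{1}{x+s}+\tfrac{1}{1+sx}\big)\rmd\mu(s)$ evaluated at $x=p_i/p_j$, and this is obtained by inserting the integral representation of $g$ into \eqref{cf:eq:correspondenceFG} and simplifying $g(x)+x\,g(x^{-1})=(x-1)^2\int\big(\tfrac1{x+s}+\tfrac1{1+sx}\big)\rmd\mu(s)$. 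On the diagonal this reproduces the familiar classical formula $\tfrac12 g''(1)\sum_i C_{ii}^2/p_i$, since $g''(1)=2\int\rmd\mu(s)/(1+s)$; the operator monotonicity of the resulting $f$ is a separate, standard fact about operator means once $g$ is operator convex with $g(1)=0$.

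The step I expect to be the real obstacle is the analytic control of the integral representation when $g$ is unbounded, as for the relative entropy $g(x)=-\log x$: one must track possible atoms of $\mu$ at the endpoints $s=0$ and $s=\infty$ (the latter giving an extra $(x-1)^2$ term), and justify exchanging the $s$-integration with the $\varepsilon$-expansion, which requires the perturbative inversions of $\RR_\rho$ and of $\LL_\sigma+s\RR_\rho$ to be uniform enough in $s$. The purely non-commutative part of the argument, by contrast, stays under control thanks to the identities $\Delta[\rho]=\sigma$ and $[\LL_X,\RR_Y]=0$.
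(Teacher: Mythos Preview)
The paper does not supply a proof of this theorem: it is stated as a known result of Lesniewski and Ruskai, with a citation, and is only \emph{used} in the proof of Corollary~\ref{cor:2}. So there is no ``paper's own proof'' to compare against.

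That said, your proposal is correct and is essentially the argument of the original Lesniewski--Ruskai paper. The reduction via the integral representation $g(x)=g'(1)(x-1)+\int\frac{(x-1)^2}{x+s}\,\rmd\mu(s)$ to the elementary kernels $g_s$, followed by the Sylvester-equation computation of the second-order coefficient, is exactly the standard route; your identity $\Tr{(\Delta-\id)^2(\Delta+s)^{-1}[\rho]}=-(s+1)\Tr{N\rho}$ and the subsequent order-by-order trace manipulations check out, and the final matching with $1/f$ via symmetrisation in $i\leftrightarrow j$ is correct. Your caveat about endpoint atoms and the exchange of the $s$-integral with the $\varepsilon$-expansion is well placed: this is handled in the original reference by noting that for full-rank $\pi$ the spectrum of $\LL_\pi\RR_\pi^{-1}$ lies in a compact subset of $(0,\infty)$, so the resolvents $(\LL_\sigma+s\RR_\rho)^{-1}$ are uniformly bounded in $s\in[0,\infty]$ for $\varepsilon$ small enough, which makes the interchange routine.
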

	This represents the first possible derivation of the Fisher information metrics for quantum systems. Still,	in order to claim that the operators in Eq.~\eqref{cf:eq:65} define all and only the Fisher information metrics we need to resort to their dynamical characterisation, i.e., their contractivity under arbitrary CPTP maps. This 
	is provided by the independent result of Petz:
	\begin{theorem}[Petz~\cite{petzMonotoneMetricsMatrix1996}]\label{cf:thm:Petz}
		All the metrics on quantum states that contract under arbitrary CPTP maps are of the form:
		\begin{align}
			K_{f,\pi}(A,B)	 := \Tr{A\, \,\J_f^{-1}\big|_\pi [B]},\label{cf:eq:monotoneMetrics}
		\end{align}
		where $f:\RR^+\rightarrow\RR^+$ is an operator monotone function from the family in Eq.~\eqref{cf:eq:correspondenceFG}. 
	\end{theorem}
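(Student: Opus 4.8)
The plan is to prove the two inclusions separately; essentially all the difficulty lies in showing that \emph{every} contractive metric has the stated form. The reverse inclusion is almost immediate given Thm.~\ref{cf:thm:Ruskai}: if $f$ is operator monotone and is obtained through Eq.~\eqref{cf:eq:correspondenceFG} from a matrix convex $g$, then by Thm.~\ref{cf:thm:Ruskai} the quadratic form $A\mapsto\Tr{A\,\J_f^{-1}|_\pi[A]}$ on traceless Hermitian $A$ equals $\lim_{\varepsilon\to0}\tfrac{2}{\varepsilon^2}H_g(\pi+\varepsilon A||\pi)$. Plugging a CPTP $\Phi$ into the monotonicity axiom~\ref{cf:item:monotonicity}, $H_g(\Phi(\pi)+\varepsilon\,\Phi(A)||\Phi(\pi))\le H_g(\pi+\varepsilon A||\pi)$, and letting $\varepsilon\to0$ then gives $K_{f,\Phi(\pi)}(\Phi(A),\Phi(A))\le K_{f,\pi}(A,A)$, so each $K_{f,\pi}$ is a contractive metric. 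The content of the theorem is the converse.

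For the converse, let $\{K_\pi\}$ be any smooth, dimension-consistent family of Riemannian metrics on full-rank density matrices that contracts under all CPTP maps. First I would use that unitary conjugations $\Phi_U=U(\cdot)U^\dagger$ are CPTP with CPTP inverse, so $K_{U\pi U^\dagger}(UAU^\dagger,UBU^\dagger)=K_\pi(A,B)$; it is therefore enough to fix $\pi=\diag(\lambda_1,\dots,\lambda_d)$. Applying the same invariance to the diagonal unitaries forces $K_\pi$ to be block-diagonal with respect to the splitting of a traceless Hermitian $A$ into its diagonal part $\{A_{ii}\}$ and its off-diagonal entries $A_{ij}$, i.e.\ $K_\pi(A,A)=q_\pi(\{A_{ii}\})+\sum_{i\neq j}c(\lambda_i,\lambda_j)\,|A_{ij}|^2$ for some quadratic form $q_\pi$ and some function $c$, which is symmetric because $K_\pi$ is a symmetric bilinear form. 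Restricting to diagonal $\pi$ and diagonal perturbations turns CPTP maps into classical stochastic matrices, so the classical Chentsov theorem recalled in the Introduction pins down $q_\pi(\{A_{ii}\})=\sum_i|A_{ii}|^2/\lambda_i$ up to an overall constant, which we normalize to $1$.

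Next I would extract functional equations for $c$ from well-chosen channels. Isometric block embeddings $\CC^2\hookrightarrow\CC^d$ together with the conditional expectations (pinchings) onto the corresponding block subalgebras show that $c(\lambda_i,\lambda_j)$ depends only on the pair of eigenvalues and not on the ambient dimension; tensoring $\pi$ with a maximally mixed ancilla (a CPTP map whose left inverse is the partial trace) then yields the homogeneity $c(s\lambda,s\mu)=s^{-1}c(\lambda,\mu)$ for all $s>0$. Combining this with the symmetry of $c$ and with the value $c(\lambda,\lambda)=1/\lambda$ forces $c(\lambda,\mu)=\bigl(\mu\,f(\lambda/\mu)\bigr)^{-1}$ for a single function $f$ with $f(1)=1$ and $f(t)=t\,f(1/t)$; rewriting the quadratic form in superoperator language reproduces $K_\pi(A,B)=\Tr{A\,\J_f^{-1}|_\pi[B]}$ with $\J_f|_\pi=\RR_\pi\,f(\LL_\pi\RR_\pi^{-1})$ --- \emph{provided} one can still show that $f$ must be operator monotone (equivalently, that $f$ belongs to the family of Eq.~\eqref{cf:eq:correspondenceFG}), which is not automatic from the two elementary properties just obtained.

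This last step is the main obstacle. Contractivity of $K$ is equivalent to the superoperator inequality $\Phi^\dagger\,\J_f^{-1}|_{\Phi(\pi)}\,\Phi\le\J_f^{-1}|_\pi$ for all CPTP $\Phi$ and all $\pi$. To deduce operator monotonicity of $f$, I would follow Petz and specialize $\Phi$ to conditional expectations onto $2\times2$ block subalgebras and to a one-parameter deformation of these; on such test data the superoperator inequality collapses to an operator inequality for the $2\times2$ operators $\J_f$, which by the Kubo--Ando theory of operator means --- a positive $f$ with $f(1)=1$ defines an operator mean $(X,Y)\mapsto X^{1/2}f(X^{-1/2}YX^{-1/2})X^{1/2}$ if and only if $f$ is operator monotone --- is equivalent to operator monotonicity of $f$. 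Producing channels that genuinely detect the failure of operator monotonicity, and carrying out this Kubo--Ando reduction with care, is where the bulk of the work sits; the matching sufficiency (operator monotone $f$ really does yield a contractive metric) is already handled by the first paragraph, since any such $f$ with $f(t)=t\,f(1/t)$ arises from a matrix convex $g$ via Eq.~\eqref{cf:eq:correspondenceFG}.
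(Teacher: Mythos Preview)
The paper does not actually prove Thm.~\ref{cf:thm:Petz}. It is stated as a background result attributed to Petz~\cite{petzMonotoneMetricsMatrix1996} and used without further argument; the paper's own contributions are Thm.~\ref{cf:theo:PiffContracts} and its corollaries. So there is nothing in the paper to compare your proposal against.

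That said, your outline is a faithful sketch of the strategy in Petz's original article: unitary invariance to diagonalise $\pi$, phase-unitary invariance to decouple the $(i,j)$ blocks, reduction of the diagonal sector to the classical Chentsov theorem, homogeneity from embedding/partial-trace pairs, and finally the identification of operator monotonicity via an operator-mean (Kubo--Ando) argument. Two small comments. First, your ``reverse inclusion'' via Thm.~\ref{cf:thm:Ruskai} is fine but slightly circular in spirit: Thm.~\ref{cf:thm:Ruskai} presupposes that $H_g$ is a contrast function, whose monotonicity axiom already encodes contraction under CPTP maps; Petz instead derives contractivity of $K_{f,\pi}$ directly from operator monotonicity of $f$ through an operator inequality for $\J_f$, without passing through a divergence. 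Second, you correctly flag the hard step --- producing CPTP test maps that force operator monotonicity of $f$ --- but your description (``specialize $\Phi$ to conditional expectations onto $2\times2$ block subalgebras and to a one-parameter deformation of these'') is too vague to count as a proof; in Petz's paper this is carried out by relating the contractivity inequality to the transformer inequality for operator means, and that reduction is where the genuine analytic content lives.
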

	This result makes the definition of Fisher information metrics in the quantum regime unambiguous. Indeed, even if the single metric $\eta_{i,j}\big|_{\boldsymbol{p}}$ is now replaced by a continuous family of possible operators $\J^{-1}_f\big|_\pi $, each one describing the local behaviour of a different contrast function (Thm.~\ref{cf:thm:Ruskai}), it would not be correct to postulate some extra axioms to isolate a unique quantity. Indeed, Thm.~\ref{cf:thm:Petz} tells us that narrowing down the family in Eq.~\eqref{cf:eq:65} would also destroy the connection with the dynamical characterisation of the Fisher information metrics. \ms{Several metrics in this family have important applications in information theory, such as 
    the Bures metric (corresponding to $\J_{\frac{1+x}{2}}\big|_\pi[A]=\frac{1}{2}\{\pi,A\}$) in quantum metrology~\cite{paris2009quantum}, 
    the Kubo-Mori-Bogoliubov metric ($\J_{\frac{x-1}{\ln x}}\big|_\pi[A] = \int_0^1{\rm d}s\,  \pi^s A\pi^{(1-s)}$) as the local expansion of the quantum relative entropy, or
    the Wigner-Yanase metric ($\J_{\frac{(1+\sqrt{x})^2}{4}}\big|_\pi[A]=\frac{1}{4}\{\sqrt{\pi},\{\sqrt{\pi},A\}\}$) in hypothesis testing~\cite{audenaert2007discriminating}. 
    For a more exhaustive classification, see~\cite{scandi2023quantum}}.
    \ms{Despite the exotic form of these superoperators, 
    it should be noticed that for commuting observables the uniqueness of the classical Fisher information is reattained. That is, if $[A, \pi] = 0$, then:
	$
		\J_f^{-1}\big|_\pi [A] = \pi^{-1} \,A\;,
	$
    irrespectively of the defining function $f$.}
    

	\section{Contractivity implies physicality}
    In this section we prove our main result, that is how the property of statistical contractivity isolates physical evolutions among all possible linear maps. Our main mathematical theorem (Thm.~\ref{cf:theo:PiffContracts}) characterizes positive maps via their contraction of the Fisher information metric. This can be then extended to complete positivity (Cor.~\ref{cor:1}) and contrast functions contractivity (Cor.~\ref{cor:2}).

    \begin{figure}
		\centering
		\includegraphics[width=1.0\linewidth]{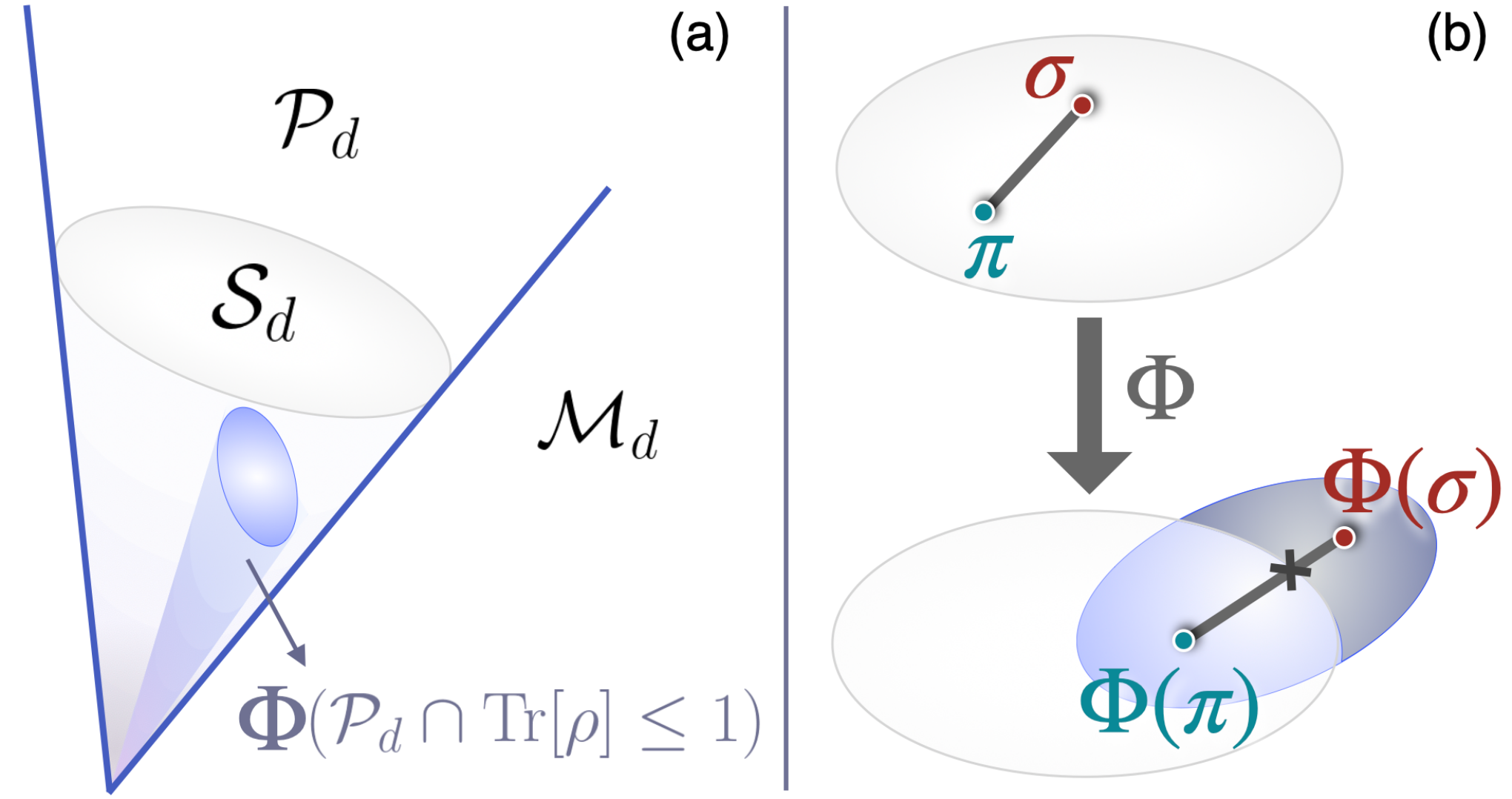}
			\caption{\ms{We consider a generic linear map $\Phi$, and its action on the cone of positive operators $\mathcal{P}_d$ (including normalized states $\mathcal{S}_d$ with trace 1).
            (a) Depiction of our main result (Thm.~\ref{cf:theo:PiffContracts}): the contractivity of the Fisher information metric~\eqref{eq:contractionContrast} implies that $\Phi$ is positive ($\Phi(\mathcal{P}_d) \subseteq \mathcal{P}_d$) and does not increase the trace.
            (b) Sketch of the positivity proof, valid also when restricting to $\mathcal{S}_d$ only (Cor.~\ref{cor:1}):} thanks to condition~\ref{cf:it:2} one can always find a point $\pi$ in $\mathcal{P}^\circ_d$ that gets mapped into $\mathcal{P}^\circ_d$. Now assume there is a matrix $\sigma$ in the $\mathcal{P}^\circ_d$ that gets mapped outside of $\mathcal{P}_d$. Then, the segment connecting $\pi$ to $\sigma$ will be mapped into a curve intersecting the boundary of $\mathcal{P}_d$. Since the Fisher information is bounded on the interior of the space of states, and diverges on its boundary, there must be a point for which the Fisher metric increases.}
		\label{fig:proof}
	\end{figure}
    
    Define $\mathcal{M}_d(\mathbb{C})$ to be the space of $d \times d$ complex matrices,  $\mathcal{P}_d\subset\mathcal{M}_d(\mathbb{C})$ to be the subset of positive semidefinite matrices, and $\mathcal{S}_d\subset\mathcal{P}_d$ the space of states, i.e., the matrices in $\mathcal{P}_d$ with trace one. Moreover, denote the interior of a set $X$ as $X^\circ$. Then, it holds that:
	\begin{theorem}\label{cf:theo:PiffContracts}
		Consider a Hermitian preserving, linear map $\Phi: \mathcal{M}_d(\mathbb{C})\rightarrow\mathcal{M}_d(\mathbb{C})$\ms{, and any choice $f$ of Fisher information metrics~\eqref{cf:eq:monotoneMetrics}}.  If $\Phi$ satisfies the following two properties: 
		\begin{enumerate}
			\item $\Phi$ maps at least one point from $ \mathcal{P}^\circ_d$ into $\mathcal{P}^\circ_d$;\label{cf:it:2}
			\item 
            for any matrix $\rho \in \mathcal{P}^\circ_d$ such that $\Phi(\rho)\in \mathcal{P}^\circ_d$, and any vector $\delta \rho$ in the $\mathcal{P}_d$-tangent space, \ms{the Fisher information decreases:}
            \label{cf:it:3}
			\begin{align}
				K_{f,\rho}(\delta\rho,\delta\rho) \geq
				K_{f,\Phi(\rho)}(\Phi(\delta\rho),\Phi(\delta\rho)) \,;\label{eq:contractionContrast}
			\end{align}
		\end{enumerate}
		then the image of ${\Phi}$ is completely contained in $\mathcal{P}_d$, meaning that $\Phi$ is a positive map (P). Moreover, $\Phi$ is trace non-increasing.
	\end{theorem}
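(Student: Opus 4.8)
\emph{Strategy.} I would argue by contradiction, exploiting the one structural feature that makes this result possible: for every operator monotone $f$ in the family of Thm.~\ref{cf:thm:Petz}, the metric $K_{f,\tau}$ is finite for $\tau\in\mathcal{P}^\circ_d$ but blows up as $\tau$ approaches $\partial\mathcal{P}_d$ (already visible classically, where $\eta_{i,j}|_{\boldsymbol{p}}=\delta_{i,j}/p_i$ diverges when some $p_i\to0$). The plan is to exhibit an affine path in $\mathcal{P}^\circ_d$ whose $\Phi$-image leaves $\mathcal{P}_d$, and to show that \eqref{eq:contractionContrast} cannot hold as that image reaches the boundary; trace non-increase then drops out of \eqref{eq:contractionContrast} evaluated on one well-chosen direction.

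\emph{Step 1: a boundary-crossing segment.} By hypothesis~\ref{cf:it:2}, fix $\pi\in\mathcal{P}^\circ_d$ with $\Phi(\pi)\in\mathcal{P}^\circ_d$. Assume $\Phi$ is not positive, so $\Phi(\sigma_0)\notin\mathcal{P}_d$ for some $\sigma_0\in\mathcal{P}_d$. Since eigenvalues depend continuously on the matrix and $\Phi(\sigma_0)$ has a strictly negative eigenvalue, $\sigma:=(1-\varepsilon)\sigma_0+\varepsilon\pi$ lies in $\mathcal{P}^\circ_d$ yet still satisfies $\Phi(\sigma)\notin\mathcal{P}_d$ for all sufficiently small $\varepsilon>0$; I fix such a $\sigma$. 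Put $\gamma(t):=(1-t)\pi+t\sigma$. As $\mathcal{P}^\circ_d$ is convex, $\gamma(t)\in\mathcal{P}^\circ_d$ for all $t\in[0,1]$, while $t\mapsto\Phi(\gamma(t))$ is affine; hence $\{t\in[0,1]:\Phi(\gamma(t))\in\mathcal{P}_d\}$ is a closed convex set containing a neighbourhood of $0$ but not $1$, i.e. it equals $[0,t^\ast]$ with $0<t^\ast<1$, and $\Phi(\gamma(t^\ast))\in\partial\mathcal{P}_d$, so $\lambda_{\min}\!\big(\Phi(\gamma(t^\ast))\big)=0$. Writing $\gamma(t)=(1-t/t^\ast)\,\pi+(t/t^\ast)\,\gamma(t^\ast)$ exhibits, for $t\in[0,t^\ast)$, the matrix $\Phi(\gamma(t))$ as a convex combination of $\Phi(\pi)\in\mathcal{P}^\circ_d$ (with positive weight) and $\Phi(\gamma(t^\ast))\in\mathcal{P}_d$, hence $\Phi(\gamma(t))\in\mathcal{P}^\circ_d$. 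Therefore hypothesis~\ref{cf:it:3} applies at $\rho=\gamma(t)$ for every $t\in[0,t^\ast)$.

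\emph{Step 2: the divergence (the hard part).} A short computation from \eqref{cf:eq:65} diagonalizes $\J_f\big|_\tau$ in the operator basis $\{|v_i\rangle\langle v_j|\}$ associated with the spectral decomposition $\tau=\sum_i\lambda_i|v_i\rangle\langle v_i|$, with eigenvalues $\lambda_j\,f(\lambda_i/\lambda_j)>0$; consequently, for any Hermitian $Y$,
\begin{align}
	K_{f,\tau}(Y,Y)=\sum_{i,j}\frac{|\langle v_i|Y|v_j\rangle|^2}{\lambda_j\,f(\lambda_i/\lambda_j)}\ \ge\ \frac{|\langle v_1|Y|v_1\rangle|^2}{\lambda_{\min}(\tau)\,f(1)},
\end{align}
by dropping all but the diagonal term of the least eigenvalue $\lambda_1=\lambda_{\min}(\tau)$. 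Applying \eqref{eq:contractionContrast} along the segment with the \emph{fixed} direction $\delta\rho=\pi$, so that $Y=\Phi(\pi)\succ0$ and hence $|\langle v_1|\Phi(\pi)|v_1\rangle|^2\ge\lambda_{\min}(\Phi(\pi))^2>0$ for every unit vector $v_1$, gives for all $t\in[0,t^\ast)$
\begin{align}
	K_{f,\gamma(t)}(\pi,\pi)\ \ge\ K_{f,\Phi(\gamma(t))}\!\big(\Phi(\pi),\Phi(\pi)\big)\ \ge\ \frac{\lambda_{\min}(\Phi(\pi))^2}{\lambda_{\min}\!\big(\Phi(\gamma(t))\big)\,f(1)}.
\end{align}
As $t\uparrow t^\ast$ the right-hand side $\to+\infty$ because $\lambda_{\min}(\Phi(\gamma(t)))\to0$, whereas the left-hand side $\to K_{f,\gamma(t^\ast)}(\pi,\pi)<\infty$ since $\gamma(t^\ast)\in\mathcal{P}^\circ_d$ — a contradiction. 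Hence $\Phi(\mathcal{P}_d)\subseteq\mathcal{P}_d$, i.e. $\Phi$ is positive. I expect this step to be the main obstacle: the convex geometry of Step~1 is routine, but here one must keep the numerator bounded below — precisely what positivity of $\Phi(\pi)$, granted by~\ref{cf:it:2}, secures — while the boundary crossing drives the denominator to zero.

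\emph{Step 3: trace non-increase.} From \eqref{cf:eq:65} one computes $\J_f\big|_\rho[\mathds{1}]=f(1)\,\rho$, hence $\J_f^{-1}\big|_\rho[\rho]=\mathds{1}/f(1)$ and $K_{f,\rho}(\rho,\rho)=\Tr{\rho}/f(1)$ for every $\rho\in\mathcal{P}^\circ_d$. Choosing $\delta\rho=\rho$ in \eqref{eq:contractionContrast} (a legitimate tangent vector, the $\mathcal{P}_d$-tangent space at an interior point being the full space of Hermitian matrices) then gives $\Tr{\rho}\ge\Tr{\Phi(\rho)}$ whenever $\rho,\Phi(\rho)\in\mathcal{P}^\circ_d$. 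Finally, for an arbitrary $\rho\in\mathcal{P}_d$, the positivity established above makes $\Phi\big((1-\varepsilon)\rho+\varepsilon\pi\big)\in\mathcal{P}^\circ_d$, so the inequality applies to $(1-\varepsilon)\rho+\varepsilon\pi$; letting $\varepsilon\to0$ yields $\Tr{\Phi(\rho)}\le\Tr{\rho}$, i.e. $\Phi$ is trace non-increasing.
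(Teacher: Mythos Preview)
Your proof is correct and follows the same overall strategy as the paper: argue by contradiction, build an affine segment in $\mathcal{P}^\circ_d$ whose image crosses $\partial\mathcal{P}_d$, and exploit the divergence of the Fisher metric at the boundary to violate \eqref{eq:contractionContrast}; then read off trace non-increase from $\J_f^{-1}\big|_\rho[\rho]\propto\mathds{1}$.

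The one noteworthy difference is in the execution of the divergence step. The paper constructs, for each $\eta$, a perturbation $\delta\rho_\eta$ whose image has a nonvanishing component along the near-zero eigenvector $\ket{\psi_\eta}$, and proves such a perturbation exists by a separate contradiction argument (if it did not, every image would be rank-deficient, contradicting condition~\ref{cf:it:2}). You bypass this entirely by choosing the \emph{fixed} direction $\delta\rho=\pi$: since $\Phi(\pi)\succ0$, the numerator $|\langle v_1|\Phi(\pi)|v_1\rangle|^2\ge\lambda_{\min}(\Phi(\pi))^2$ is bounded below uniformly in $t$, regardless of how the eigenvector $v_1$ of $\Phi(\gamma(t))$ rotates as $t\uparrow t^\ast$. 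This is a genuine simplification: it removes the auxiliary existence argument and makes the contradiction a clean one-line limit. Your Step~3 is also slightly more careful than the paper's, extending the trace inequality from $\{\rho:\Phi(\rho)\in\mathcal{P}^\circ_d\}$ to all of $\mathcal{P}_d$ by approximation (and keeping track of the normalisation $f(1)$, which the paper implicitly sets to $1$).
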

	We refer the reader to  Appendix~1 for the precise proof of the theorem. The main idea is explained in Fig.~\ref{fig:proof}, and it leverages the fact that the Fisher information metric is bounded in the interior of $\mathcal{P}_d$, while it diverges on its boundary. Then, one can prove the claim by contradiction: if there is even a single point in $\mathcal{P}_d^\circ$ that gets mapped outside of $\mathcal{P}_d$, thanks to the linearity of $\Phi$, one can define a segment in $\mathcal{P}_d^\circ$ whose image will intersect the boundary of $\mathcal{P}_d$ (see Fig.~\ref{fig:proof}). This allows to find a point for which the Fisher metric expands, contradicting condition~\ref{cf:it:3} and proving the claim.
	
	There are a couple of remarks to be made: first, it should be noticed that since the Fisher information metrics are only defined on $\mathcal{P}_d^\circ$, condition~\ref{cf:it:2} is minimal to introduce the concept of contractivity (encoded by Eq.~\eqref{eq:contractionContrast}). Moreover, the restriction on the set of $\rho$ considered at the beginning of condition~\ref{cf:it:3} is also minimal in order for the two sides of Eq.~\eqref{eq:contractionContrast} to be regular, i.e., we only focus on  matrices $\rho$ for which both the Fisher metrics in $\rho$ and in $\Phi(\rho)$ are well-defined. 
    Finally, let us notice that the use of $\mathcal{P}_d$ as a domain was chosen to show the mathematical result in full generality, as well as to accommodate potentially trace non-preserving maps. In this sense, the theorem again isolates physical maps, which can only \emph{decrease} the probability-normalization of states~\footnote{Maps that decrease the trace correspond to evolutions with post-selection, i.e., with a non-null probability of discarding the system.}.  
    In the corollaries that follows we restrict, for clarity, to standard trace-preserving (TP) maps on the physical set of states $\mathcal{S}_d$.

    \begin{corollary}\label{cor:1}
		Consider a Hermitian preserving, trace-preserving linear map $\Phi: \mathcal{M}_d(\mathbb{C})\rightarrow\mathcal{M}_d(\mathbb{C})$. If $\Phi$ satisfies the following two properties:
        \begin{enumerate}
            \item it maps at least one point from $ \mathcal{S}^\circ_d$ into $\mathcal{S}^\circ_d$;
            \item for all $\rho\in\mathcal{S}^\circ_d$ such that $\Phi(\rho)\in \mathcal{S}^\circ_d$, and for all $ \delta \rho$ in the $\mathcal{S}_d$-tangent space, it contracts the Fisher metric as in Eq.~\eqref{eq:contractionContrast};
        \end{enumerate}
        Then $\Phi$ is a positive map (PTP). Moreover, if $\Phi\otimes\idO_d$, acting on  $(\mathcal{S}_d\otimes\mathcal{S}_d)^\circ$, satisfies the same conditions, 
        then $\Phi$ is a completely positive map (CPTP).
	\end{corollary}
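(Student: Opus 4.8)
The plan is to obtain both statements as consequences of Thm.~\ref{cf:theo:PiffContracts}, which already characterizes positivity through Fisher contractivity on $\mathcal{P}_d$.

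For the PTP claim I would show that the Corollary's hypotheses imply those of Thm.~\ref{cf:theo:PiffContracts}. Hypothesis~\ref{cf:it:2} is immediate since $\mathcal{S}^\circ_d\subset\mathcal{P}^\circ_d$, so the only real task is to promote the Corollary's contractivity assumption — which bounds $K_{f,\cdot}$ only on the codimension-one $\mathcal{S}_d$-tangent space at normalized states — to the contractivity demanded by hypothesis~\ref{cf:it:3}, namely along \emph{every} Hermitian direction and at every $\rho\in\mathcal{P}^\circ_d$ with $\Phi(\rho)\in\mathcal{P}^\circ_d$. I would do this with two elementary facts about the Fisher operator of Eq.~\eqref{cf:eq:65}: (i) homogeneity, $\J_f\big|_{c\pi}=c\,\J_f\big|_{\pi}$ (the scalars cancel inside $f(\LL_\pi\RR_\pi^{-1})$), hence $K_{f,c\pi}(A,A)=c^{-1}K_{f,\pi}(A,A)$, which lets me rescale an arbitrary $\rho\in\mathcal{P}^\circ_d$ to $\rho/\tr\rho\in\mathcal{S}^\circ_d$ — and, since $\Phi$ is trace-preserving, $\Phi(\rho)/\tr\rho=\Phi(\rho/\tr\rho)\in\mathcal{S}^\circ_d$ as well; and (ii) $\J_f^{-1}\big|_\pi[\pi]$ is proportional to the identity operator, so the radial direction is $K_{f,\rho}$-orthogonal to every traceless Hermitian and $K_{f,\rho}(\rho,\rho)$ is proportional to $\tr\rho$. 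Writing a generic Hermitian $\delta\rho$ as its traceless part plus a multiple of $\rho$, the cross terms vanish, the radial contribution is \emph{equal} on the two sides of Eq.~\eqref{eq:contractionContrast} (both are proportional to $\tr\rho=\tr\Phi(\rho)$, by trace preservation), and the traceless contribution contracts by the Corollary's hypothesis combined with homogeneity; summing the two pieces gives Eq.~\eqref{eq:contractionContrast} for all $\delta\rho$. Thm.~\ref{cf:theo:PiffContracts} then yields $\Phi(\mathcal{P}_d)\subseteq\mathcal{P}_d$, i.e.\ positivity, and with trace preservation this is PTP. (An equivalent route is to rerun the geometric argument of Thm.~\ref{cf:theo:PiffContracts} with $\mathcal{P}_d$ replaced by $\mathcal{S}_d$ throughout: trace preservation keeps the segment $[\pi,\sigma]$ and its tangent vector inside $\mathcal{S}_d$ and its tangent space, and $K_{f,\cdot}$ still diverges on $\partial\mathcal{S}_d$ while staying finite on $\mathcal{S}^\circ_d$.)

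For the CPTP claim I would set $\Psi:=\Phi\otimes\idO_d$, viewed as a map $\mathcal{M}_{d^2}(\mathbb{C})\to\mathcal{M}_{d^2}(\mathbb{C})$. Since $\Phi$ is linear, Hermitian-preserving and trace-preserving, so is $\Psi$, and the hypotheses imposed on $\Phi\otimes\idO_d$ are precisely the Corollary's hypotheses~1--2 for $\Psi$ in dimension $d^2$. Applying the PTP part just established to $\Psi$ shows $\Psi$ is positive, i.e.\ $\Phi$ is $d$-positive; by Choi's theorem a $d$-positive linear map on $\mathcal{M}_d(\mathbb{C})$ is completely positive, so together with trace preservation $\Phi$ is CPTP.

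I expect the one delicate step to be the tangent-space matching in the PTP part: Thm.~\ref{cf:theo:PiffContracts} is stated with contractivity over the full $\mathcal{P}_d$-tangent space, whereas the natural hypothesis for trace-preserving maps controls only traceless directions. The key realization that makes this work is that the missing radial direction is fixed up to scale by $\J_f^{-1}\big|_\pi$ (fact (ii) above) and is therefore left invariant by any trace-preserving map, so it never threatens the inequality; everything else is bookkeeping, apart from invoking the standard fact (Choi's theorem) that $d$-positivity upgrades to complete positivity.
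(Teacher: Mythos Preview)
Your proof is correct. Your primary route differs from the paper's, while your parenthetical ``equivalent route'' is exactly what the paper does: it simply reruns the boundary/divergence argument of Thm.~\ref{cf:theo:PiffContracts} inside $\mathcal{S}_d$, observing that the only place the proof uses the tangent space is the decomposition $\delta\rho=\sigma_1-\sigma_2$, which for traceless $\delta\rho$ can be taken with $\sigma_{1/2}\in\mathcal{S}_d$.

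Your main argument is a clean \emph{reduction} to Thm.~\ref{cf:theo:PiffContracts} rather than a rerun: using homogeneity $\J_f\big|_{c\pi}=c\,\J_f\big|_\pi$ and the identity $\J_f^{-1}\big|_\rho[\rho]=\idO$ (so $K_{f,\rho}(\rho,A)=\Tr{A}$), you split any Hermitian $\delta\rho$ into a traceless part plus a multiple of $\rho$, note the cross terms vanish and the radial contribution equals $c^2\Tr{\rho}=c^2\Tr{\Phi(\rho)}$ on both sides, and then pull the traceless inequality back from $\tilde\rho=\rho/\Tr{\rho}\in\mathcal{S}_d^\circ$ via homogeneity. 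This buys you a black-box use of Thm.~\ref{cf:theo:PiffContracts} and makes transparent why trace preservation is exactly the ingredient that neutralizes the missing radial direction. The paper's approach is shorter to state but requires rechecking the geometric argument; yours isolates the algebraic reason it works. For the CPTP part you make explicit the appeal to Choi's theorem ($d$-positivity on $\mathcal{M}_d$ implies complete positivity), which the paper leaves implicit.
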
 
    The proof of the corollary is contained at the end of Appendix 1, and it is equivalent to the one for Thm.~\ref{cf:theo:PiffContracts}. It should be noticed that Thm.~\ref{cf:theo:PiffContracts} and Cor.~\ref{cor:1} completely characterize physical evolutions: the first gives a condition for positivity, which is sufficient for classical systems, while the latter characterise CPTP maps, the relevant notion of maps for quantum systems.
 
	We conclude by observing that the same results can be phrased in terms of \emph{any of the contrast functions} introduced above.
    In fact, putting together Thm.~\ref{cf:thm:Ruskai} with Cor.~\ref{cor:1}, we have:
	\begin{corollary}\label{cor:2}
		Consider a  Hermitian preserving, trace preserving, linear map $\Phi: \mathcal{M}_d(\mathbb{C})\rightarrow\mathcal{M}_d(\mathbb{C})$ mapping at least one point from $ \mathcal{S}^\circ_d$ into $\mathcal{S}^\circ_d$. Consider any contrast function $H_g (\rho||\sigma) $. If for every two states $\rho$ and $\sigma$ in $\mathcal{S}_d^o$ it holds that:
		\begin{align}
			H_g (\rho||\sigma) \geq H_g (\Phi(\rho)||\Phi(\sigma))\,,
		\end{align}
		then $\Phi$ is a positive map (PTP). Moreover, if the same holds for $\Phi\otimes\idO_d$ then the map is completely positive (CPTP).
	\end{corollary}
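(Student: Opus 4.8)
The plan is to reduce the statement about contrast functions to the already-established characterization via the Fisher metric (Cor.~\ref{cor:1}), using the local expansion in Thm.~\ref{cf:thm:Ruskai} as the bridge. The key observation is that contractivity of $H_g$ under $\Phi$ is a \emph{global} hypothesis over all pairs $\rho,\sigma\in\mathcal{S}_d^\circ$, but it contains a \emph{local} shadow: for any $\rho\in\mathcal{S}_d^\circ$ with $\Phi(\rho)\in\mathcal{S}_d^\circ$, and any traceless Hermitian $\delta\rho$ in the $\mathcal{S}_d$-tangent space, the states $\rho$ and $\rho+\varepsilon\,\delta\rho$ lie in $\mathcal{S}_d^\circ$ for all sufficiently small $\varepsilon$, and likewise $\Phi(\rho)$ and $\Phi(\rho+\varepsilon\,\delta\rho)=\Phi(\rho)+\varepsilon\,\Phi(\delta\rho)$ lie in $\mathcal{S}_d^\circ$. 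Applying the contractivity hypothesis to this pair gives $H_g(\rho\,\|\,\rho+\varepsilon\,\delta\rho)\geq H_g(\Phi(\rho)\,\|\,\Phi(\rho)+\varepsilon\,\Phi(\delta\rho))$ for all small $\varepsilon$.

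First I would expand both sides with Thm.~\ref{cf:thm:Ruskai}: the left side equals $\tfrac{\varepsilon^2}{2}\,\mathrm{Tr}\!\left[\delta\rho\,\J_f^{-1}\big|_\rho[\delta\rho]\right]+\bigo{\varepsilon^3}=\tfrac{\varepsilon^2}{2}K_{f,\rho}(\delta\rho,\delta\rho)+\bigo{\varepsilon^3}$, and the right side equals $\tfrac{\varepsilon^2}{2}K_{f,\Phi(\rho)}(\Phi(\delta\rho),\Phi(\delta\rho))+\bigo{\varepsilon^3}$, where $f$ is the operator monotone function attached to $g$ through Eq.~\eqref{cf:eq:correspondenceFG}. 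Here one must use that $\Phi$ being trace-preserving sends the traceless perturbation $\delta\rho$ to the traceless perturbation $\Phi(\delta\rho)$, so that Thm.~\ref{cf:thm:Ruskai} applies verbatim on the right. Dividing the inequality by $\varepsilon^2/2$ and letting $\varepsilon\to0^+$ kills the cubic remainders and yields exactly Eq.~\eqref{eq:contractionContrast}, i.e., $K_{f,\rho}(\delta\rho,\delta\rho)\geq K_{f,\Phi(\rho)}(\Phi(\delta\rho),\Phi(\delta\rho))$. Since $\rho$ and $\delta\rho$ were arbitrary subject to the regularity constraints, $\Phi$ satisfies hypothesis~2 of Cor.~\ref{cor:1}; hypothesis~1 is identical to the present one. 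Cor.~\ref{cor:1} then gives that $\Phi$ is PTP, and the tensored version gives CPTP when $\Phi\otimes\idO_d$ satisfies the same hypotheses — note that $\Phi\otimes\idO_d$ is again Hermitian preserving and trace preserving, and its contractivity of $H_g$ on $(\mathcal{S}_d\otimes\mathcal{S}_d)^\circ$ feeds into the second part of Cor.~\ref{cor:1} through the same local-expansion argument applied on the composite system.

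The only delicate point — and the step I expect to require the most care — is the interchange of limit and the contractivity inequality: one needs the $\bigo{\varepsilon^3}$ error terms in Thm.~\ref{cf:thm:Ruskai} to be genuinely controlled as $\varepsilon\to0$ for the \emph{fixed} data $(\rho,\delta\rho)$ and $(\Phi(\rho),\Phi(\delta\rho))$, which follows from the differentiability axiom~\ref{cf:item:differentiability} on $H_g$ (the function $h_{\rho,\sigma}$ is $C^\infty$, so its Taylor expansion in $\varepsilon$ is legitimate with remainder uniformly bounded on a neighborhood of $0$). With that in hand, taking $\varepsilon\to0^+$ in an inequality valid for every small $\varepsilon$ is immediate. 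Everything else is a direct substitution into Cor.~\ref{cor:1}, so no further obstacle arises.
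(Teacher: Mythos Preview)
Your proposal is correct and matches exactly the strategy the paper sketches in the main text: expand $H_g$ locally via Thm.~\ref{cf:thm:Ruskai}, extract the Fisher-metric contraction inequality, and then invoke Cor.~\ref{cor:1} as a black box. The paper's actual proof in Appendix~2, however, takes a slightly different route: instead of reducing to Cor.~\ref{cor:1}, it re-runs the contradiction argument of Thm.~\ref{cf:theo:PiffContracts} directly at the level of $H_g$, constructing the segment $\rho_\lambda=(1-\lambda)\pi+\lambda\sigma$, finding $\rho_\eta$ whose image has an $\eta$-small eigenvalue, and showing that $H_g(\Phi(\rho_\eta)\,\|\,\Phi(\rho_\eta+\varepsilon\,\delta\rho_\eta))$ scales as $\varepsilon^2/\eta$ and hence exceeds the finite supremum $\sup_\lambda H_g(\rho_\lambda\,\|\,\rho_\lambda+\varepsilon\,\delta\rho)$. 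Your approach is more modular and avoids repeating the geometric boundary argument; the paper's approach is more self-contained and makes explicit that the divergence of the full contrast function at the boundary, not just of its quadratic part, yields the contradiction. Both rely on Thm.~\ref{cf:thm:Ruskai} as the essential bridge, so the difference is organizational rather than conceptual.
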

 
	We refer to Appendix~2 for the precise proof, but this result simply follows from the fact that all contrast functions locally expand to Fisher metrics, hence it is sufficient to choose $\rho$ and $\sigma$ to be close-by matrices and then apply Thm.~\ref{cf:theo:PiffContracts} to them.
	Cor.~\ref{cor:2} shows that it is enough to consider any divergence to actually characterise the class of positive (or completely positive) maps. 
    In fact, while our results are more general, a specific instance of Cor.~\ref{cor:2} can be expressed as follows:
    \begin{center}
    \emph{If a linear map acting on the set of states (including, in the quantum case, ancillas with trivial dynamics) cannot increase their relative entropy, then such map represents a physical evolution.}
    \end{center}
 
	
	\section{Conclusion}
        In the axiomatic definition of quantum statistical divergences, the monotonicity condition stands out as the only request which is not purely statistical in nature: indeed, whereas the other properties can be expressed solely in terms of states, in order to introduce monotonicity one needs to explicitly introduce the concept of noisy transformations. This fact might suggest that there is some hierarchy between physical evolutions and statistical quantifiers, since one needs the first to define the latter. Corroborating this belief, the Chentsov-Petz theorem (Thm.~\ref{cf:thm:Petz}) constrains the local behaviour of statistical distances, singling out the Fisher information through a purely dynamical request.

        Still, historically most of the contrast functions were introduced in a purely statistical setting, therefore their strong connection with dynamics is \emph{a priori} not obvious and, indeed, quite remarkable. This manuscript, then, aims at clarifying this fortunate conjunction: Thm.~\ref{cf:theo:PiffContracts} and its Cor.~\ref{cor:1} show that physical evolutions can be singled out as the unique family of linear maps that contract the Fisher information. This result not only justifies the claim that physicality of evolution and statistical contractivity are equivalent notions of maps, but also hints at the source of this connection, as both dynamical maps and statistical distances arise from the underlying notion of conservation of information.
        
        Moreover, Cor.~\ref{cor:2} also shows how one could in principle define physical evolutions merely in terms of any contrast function, strengthening the message that there is no intrinsic hierarchy of fundamentality between statistical notions and dynamical ones. Still, it should be kept in mind that Thm.~\ref{cf:thm:Petz} is peculiar to the Fisher information, further corroborating the natural role this quantity plays both in quantum and classical physics.
        
        In fact, it was noticed that many dynamical aspects of evolutions, as Markovianity or detailed balance, can be naturally phrased in terms of the behaviour of maps with respect to the Fisher metrics~\cite{abiuso2023characterizing,scandi2023quantum}. The results presented here suggest that this is far from being a coincidence, showing that there is a fundamental connection which is worth exploring further in future works. Moreover, in the context of the reconstructions of quantum mechanics~\cite{hardy2015reconstructing,mueller2016information}, the fact that physical evolutions can be singled out in terms of a statistical quantity as the Fisher information might be of particular relevance for the derivation of a sensible notion of dynamics.

    \section{Acknowledgements}
    M.S. acknowledges the support of project Intramural 20235AT009 of the Spanish National Research Council. P.A. is supported by the QuantERA II programme, that has received funding from the European Union’s Horizon 2020 research and innovation programme under Grant Agreement No 101017733, and from the Austrian Science Fund (FWF), project I-6004.
    Research at the Perimeter Institute for Theoretical Physics is supported by the Government of Canada through the Department of Innovation, Science and Economic Development Canada and by the Province of Ontario through the Ministry of Research, Innovation and Science. D.D.S. is supported by the research project “Dynamics and Information Research Institute - Quantum Information, Quantum Technologies” within the agreement between UniCredit Bank and Scuola Normale Superiore di Pisa
(CI14\_UNICREDIT\_MARMI).
 
	\bibliography{bib}

\begin{thebibliography}{28}%
\makeatletter
\providecommand \@ifxundefined [1]{%
 \@ifx{#1\undefined}
}%
\providecommand \@ifnum [1]{%
 \ifnum #1\expandafter \@firstoftwo
 \else \expandafter \@secondoftwo
 \fi
}%
\providecommand \@ifx [1]{%
 \ifx #1\expandafter \@firstoftwo
 \else \expandafter \@secondoftwo
 \fi
}%
\providecommand \natexlab [1]{#1}%
\providecommand \enquote  [1]{``#1''}%
\providecommand \bibnamefont  [1]{#1}%
\providecommand \bibfnamefont [1]{#1}%
\providecommand \citenamefont [1]{#1}%
\providecommand \href@noop [0]{\@secondoftwo}%
\providecommand \href [0]{\begingroup \@sanitize@url \@href}%
\providecommand \@href[1]{\@@startlink{#1}\@@href}%
\providecommand \@@href[1]{\endgroup#1\@@endlink}%
\providecommand \@sanitize@url [0]{\catcode `\\12\catcode `\$12\catcode
  `\&12\catcode `\#12\catcode `\^12\catcode `\_12\catcode `\%12\relax}%
\providecommand \@@startlink[1]{}%
\providecommand \@@endlink[0]{}%
\providecommand \url  [0]{\begingroup\@sanitize@url \@url }%
\providecommand \@url [1]{\endgroup\@href {#1}{\urlprefix }}%
\providecommand \urlprefix  [0]{URL }%
\providecommand \Eprint [0]{\href }%
\providecommand \doibase [0]{http://dx.doi.org/}%
\providecommand \selectlanguage [0]{\@gobble}%
\providecommand \bibinfo  [0]{\@secondoftwo}%
\providecommand \bibfield  [0]{\@secondoftwo}%
\providecommand \translation [1]{[#1]}%
\providecommand \BibitemOpen [0]{}%
\providecommand \bibitemStop [0]{}%
\providecommand \bibitemNoStop [0]{.\EOS\space}%
\providecommand \EOS [0]{\spacefactor3000\relax}%
\providecommand \BibitemShut  [1]{\csname bibitem#1\endcsname}%
\let\auto@bib@innerbib\@empty
\bibitem [{\citenamefont {Shannon}(1948)}]{Shannon1948}%
  \BibitemOpen
  \bibfield  {author} {\bibinfo {author} {\bibfnamefont {C.~E.}\ \bibnamefont
  {Shannon}},\ }\bibfield  {title} {\enquote {\bibinfo {title} {A mathematical
  theory of communication},}\ }\href {\doibase
  10.1002/j.1538-7305.1948.tb01338.x} {\bibfield  {journal} {\bibinfo
  {journal} {The Bell System Technical Journal}\ }\textbf {\bibinfo {volume}
  {27}},\ \bibinfo {pages} {379--423} (\bibinfo {year} {1948})}\BibitemShut
  {NoStop}%
\bibitem [{\citenamefont {Holevo}(2013)}]{Holevo2013}%
  \BibitemOpen
  \bibfield  {author} {\bibinfo {author} {\bibfnamefont {A.~S.}\ \bibnamefont
  {Holevo}},\ }\href {\doibase doi:10.1515/9783110273403} {\emph {\bibinfo
  {title} {Quantum Systems, Channels, Information. A Mathematical
  Introduction.}}}\ (\bibinfo  {publisher} {De Gruyter},\ \bibinfo {address}
  {Berlin, Boston},\ \bibinfo {year} {2013})\BibitemShut {NoStop}%
\bibitem [{\citenamefont
  {Kullback}(1997)}]{kullbackInformationTheoryStatistics1997}%
  \BibitemOpen
  \bibfield  {author} {\bibinfo {author} {\bibfnamefont {S.}~\bibnamefont
  {Kullback}},\ }\href {https://store.doverpublications.com/0486696847.html}
  {\emph {\bibinfo {title} {Information Theory and Statistics}}},\ Dover Books
  on Mathematics\ (\bibinfo  {publisher} {Dover Publications},\ \bibinfo
  {address} {Mineola, NY},\ \bibinfo {year} {1997})\BibitemShut {NoStop}%
\bibitem [{\citenamefont {Jaynes}(1957)}]{jaynes1957information}%
  \BibitemOpen
  \bibfield  {author} {\bibinfo {author} {\bibfnamefont {E.~T.}\ \bibnamefont
  {Jaynes}},\ }\bibfield  {title} {\enquote {\bibinfo {title} {Information
  theory and statistical mechanics},}\ }\href {\doibase
  10.1103/PhysRev.106.620} {\bibfield  {journal} {\bibinfo  {journal} {Phys.
  Rev.}\ }\textbf {\bibinfo {volume} {106}},\ \bibinfo {pages} {620--630}
  (\bibinfo {year} {1957})}\BibitemShut {NoStop}%
\bibitem [{\citenamefont {Leff}\ and\ \citenamefont
  {Rex}(2014)}]{leff2014maxwell}%
  \BibitemOpen
  \bibfield  {author} {\bibinfo {author} {\bibfnamefont {H.~S.}\ \bibnamefont
  {Leff}}\ and\ \bibinfo {author} {\bibfnamefont {A.~F.}\ \bibnamefont {Rex}},\
  }\href
  {https://press.princeton.edu/books/hardcover/9780691634432/maxwells-demon}
  {\emph {\bibinfo {title} {Maxwell's Demon: {{Entropy}}, Information,
  Computing}}},\ Princeton Series in Physics\ (\bibinfo  {publisher}
  {{Princeton University Press}},\ \bibinfo {year} {2014})\BibitemShut
  {NoStop}%
\bibitem [{\citenamefont {Parrondo}\ \emph {et~al.}(2015)\citenamefont
  {Parrondo}, \citenamefont {Horowitz},\ and\ \citenamefont
  {Sagawa}}]{parrondo_thermodynamics_2015}%
  \BibitemOpen
  \bibfield  {author} {\bibinfo {author} {\bibfnamefont {J.~M.~R.}\
  \bibnamefont {Parrondo}}, \bibinfo {author} {\bibfnamefont {J.~M.}\
  \bibnamefont {Horowitz}}, \ and\ \bibinfo {author} {\bibfnamefont
  {T.}~\bibnamefont {Sagawa}},\ }\bibfield  {title} {\enquote {\bibinfo {title}
  {Thermodynamics of information},}\ }\href {\doibase 10.1038/nphys3230}
  {\bibfield  {journal} {\bibinfo  {journal} {Nature Physics}\ }\textbf
  {\bibinfo {volume} {11}},\ \bibinfo {pages} {131--139} (\bibinfo {year}
  {2015})}\BibitemShut {NoStop}%
\bibitem [{\citenamefont {Nielsen}\ and\ \citenamefont
  {Chuang}(2002)}]{nielsen2002quantum}%
  \BibitemOpen
  \bibfield  {author} {\bibinfo {author} {\bibfnamefont {M.~A.}\ \bibnamefont
  {Nielsen}}\ and\ \bibinfo {author} {\bibfnamefont {I.}~\bibnamefont
  {Chuang}},\ }\href {\doibase 10.1017/CBO9780511976667} {\emph {\bibinfo
  {title} {Quantum computation and quantum information}}}\ (\bibinfo
  {publisher} {American Association of Physics Teachers},\ \bibinfo {year}
  {2002})\BibitemShut {NoStop}%
\bibitem [{\citenamefont {{\.Z}yczkowski}\ and\ \citenamefont
  {Bengtsson}(2004)}]{zyczkowski2004duality}%
  \BibitemOpen
  \bibfield  {author} {\bibinfo {author} {\bibfnamefont {K.}~\bibnamefont
  {{\.Z}yczkowski}}\ and\ \bibinfo {author} {\bibfnamefont {I.}~\bibnamefont
  {Bengtsson}},\ }\bibfield  {title} {\enquote {\bibinfo {title} {On duality
  between quantum maps and quantum states},}\ }\href {\doibase
  https://doi.org/10.1023/B:OPSY.0000024753.05661.c2} {\bibfield  {journal}
  {\bibinfo  {journal} {Open systems \& information dynamics}\ }\textbf
  {\bibinfo {volume} {11}},\ \bibinfo {pages} {3--42} (\bibinfo {year}
  {2004})}\BibitemShut {NoStop}%
\bibitem [{\citenamefont {Chiribella}\ \emph {et~al.}(2021)\citenamefont
  {Chiribella}, \citenamefont {Aurell},\ and\ \citenamefont
  {{\.Z}yczkowski}}]{chiribella2021symmetries}%
  \BibitemOpen
  \bibfield  {author} {\bibinfo {author} {\bibfnamefont {G.}~\bibnamefont
  {Chiribella}}, \bibinfo {author} {\bibfnamefont {E.}~\bibnamefont {Aurell}},
  \ and\ \bibinfo {author} {\bibfnamefont {K.}~\bibnamefont {{\.Z}yczkowski}},\
  }\bibfield  {title} {\enquote {\bibinfo {title} {Symmetries of quantum
  evolutions},}\ }\href {\doibase
  https://doi.org/10.1103/PhysRevResearch.3.033028} {\bibfield  {journal}
  {\bibinfo  {journal} {Physical Review Research}\ }\textbf {\bibinfo {volume}
  {3}},\ \bibinfo {pages} {033028} (\bibinfo {year} {2021})}\BibitemShut
  {NoStop}%
\bibitem [{\citenamefont {Stinespring}(1955)}]{stinespring1955positive}%
  \BibitemOpen
  \bibfield  {author} {\bibinfo {author} {\bibfnamefont {W.~F.}\ \bibnamefont
  {Stinespring}},\ }\bibfield  {title} {\enquote {\bibinfo {title} {Positive
  functions on {C}$^\ast$ -algebras},}\ }\href {\doibase 10.2307/2032342}
  {\bibfield  {journal} {\bibinfo  {journal} {Proceedings of the American
  Mathematical Society}\ }\textbf {\bibinfo {volume} {6}},\ \bibinfo {pages}
  {211} (\bibinfo {year} {1955})}\BibitemShut {NoStop}%
\bibitem [{\citenamefont
  {Chentsov}(2000)}]{cencovStatisticalDecisionRules2000}%
  \BibitemOpen
  \bibfield  {author} {\bibinfo {author} {\bibfnamefont {N.~N.}\ \bibnamefont
  {Chentsov}},\ }\href {https://bookstore.ams.org/mmono-53} {\emph {\bibinfo
  {title} {Statistical {{decision rules}} and {{optimal inference}}}}}\
  (\bibinfo  {publisher} {{American Mathematical Soc.}},\ \bibinfo {year}
  {2000})\BibitemShut {NoStop}%
\bibitem [{\citenamefont {Campbell}(1986)}]{campbell1986extended}%
  \BibitemOpen
  \bibfield  {author} {\bibinfo {author} {\bibfnamefont {L.~L.}\ \bibnamefont
  {Campbell}},\ }\bibfield  {title} {\enquote {\bibinfo {title} {An extended
  {{\v{C}}}encov characterization of the information metric},}\ }\href
  {\doibase https://doi.org/10.2307/2045782} {\bibfield  {journal} {\bibinfo
  {journal} {Proceedings of the American Mathematical Society}\ }\textbf
  {\bibinfo {volume} {98}},\ \bibinfo {pages} {135--141} (\bibinfo {year}
  {1986})}\BibitemShut {NoStop}%
\bibitem [{\citenamefont {Abiuso}\ \emph {et~al.}(2023)\citenamefont {Abiuso},
  \citenamefont {Scandi}, \citenamefont {Santis},\ and\ \citenamefont
  {Surace}}]{abiuso2023characterizing}%
  \BibitemOpen
  \bibfield  {author} {\bibinfo {author} {\bibfnamefont {Paolo}\ \bibnamefont
  {Abiuso}}, \bibinfo {author} {\bibfnamefont {Matteo}\ \bibnamefont {Scandi}},
  \bibinfo {author} {\bibfnamefont {Dario~De}\ \bibnamefont {Santis}}, \ and\
  \bibinfo {author} {\bibfnamefont {Jacopo}\ \bibnamefont {Surace}},\
  }\bibfield  {title} {\enquote {\bibinfo {title} {{Characterizing
  (non-)Markovianity through Fisher information}},}\ }\href {\doibase
  10.21468/SciPostPhys.15.1.014} {\bibfield  {journal} {\bibinfo  {journal}
  {SciPost Phys.}\ }\textbf {\bibinfo {volume} {15}},\ \bibinfo {pages} {014}
  (\bibinfo {year} {2023})}\BibitemShut {NoStop}%
\bibitem [{\citenamefont {Scandi}\ \emph {et~al.}(2023)\citenamefont {Scandi},
  \citenamefont {Abiuso}, \citenamefont {Surace},\ and\ \citenamefont
  {De~Santis}}]{scandi2023quantum}%
  \BibitemOpen
  \bibfield  {author} {\bibinfo {author} {\bibfnamefont {M.}~\bibnamefont
  {Scandi}}, \bibinfo {author} {\bibfnamefont {P.}~\bibnamefont {Abiuso}},
  \bibinfo {author} {\bibfnamefont {J.}~\bibnamefont {Surace}}, \ and\ \bibinfo
  {author} {\bibfnamefont {D.}~\bibnamefont {De~Santis}},\ }\bibfield  {title}
  {\enquote {\bibinfo {title} {Quantum {F}isher information and its dynamical
  nature},}\ }\href {https://doi.org/10.48550/arXiv.2304.14984} {\bibfield
  {journal} {\bibinfo  {journal} {arXiv preprint arXiv:2304.14984}\ } (\bibinfo
  {year} {2023})}\BibitemShut {NoStop}%
\bibitem [{\citenamefont {Cramer}(1946)}]{Cramer46}%
  \BibitemOpen
  \bibfield  {author} {\bibinfo {author} {\bibfnamefont {H.}~\bibnamefont
  {Cramer}},\ }\href {\doibase https://doi.org/10.1515/9781400883868} {\emph
  {\bibinfo {title} {Mathematical methods of statistics}}}\ (\bibinfo
  {publisher} {Princeton University Press},\ \bibinfo {year}
  {1946})\BibitemShut {NoStop}%
\bibitem [{\citenamefont {Rao}(1992)}]{Rao92}%
  \BibitemOpen
  \bibfield  {author} {\bibinfo {author} {\bibfnamefont {C.~R.}\ \bibnamefont
  {Rao}},\ }\bibfield  {title} {\enquote {\bibinfo {title} {Information and the
  accuracy attainable in the estimation of statistical parameters},}\ }\href
  {\doibase 10.1007/978-1-4612-0919-5} {\bibfield  {journal} {\bibinfo
  {journal} {Kotz S., Johnson N.L. (eds) Breakthroughs in Statistics. Springer
  Series in Statistics (Perspectives in Statistics)}\ } (\bibinfo {year}
  {1992}),\ 10.1007/978-1-4612-0919-5}\BibitemShut {NoStop}%
\bibitem [{\citenamefont {Ly}\ \emph {et~al.}(2017)\citenamefont {Ly},
  \citenamefont {Marsman}, \citenamefont {Verhagen}, \citenamefont {Grasman},\
  and\ \citenamefont {Wagenmakers}}]{ly_tutorial_2017}%
  \BibitemOpen
  \bibfield  {author} {\bibinfo {author} {\bibfnamefont {A.}~\bibnamefont
  {Ly}}, \bibinfo {author} {\bibfnamefont {M.}~\bibnamefont {Marsman}},
  \bibinfo {author} {\bibfnamefont {J.}~\bibnamefont {Verhagen}}, \bibinfo
  {author} {\bibfnamefont {R.~P. P.~P.}\ \bibnamefont {Grasman}}, \ and\
  \bibinfo {author} {\bibfnamefont {E.-J.}\ \bibnamefont {Wagenmakers}},\
  }\bibfield  {title} {\enquote {\bibinfo {title} {A {Tutorial} on {Fisher}
  information},}\ }\href {\doibase https://doi.org/10.1016/j.jmp.2017.05.006}
  {\bibfield  {journal} {\bibinfo  {journal} {Journal of Mathematical
  Psychology}\ }\textbf {\bibinfo {volume} {80}},\ \bibinfo {pages} {40--55}
  (\bibinfo {year} {2017})}\BibitemShut {NoStop}%
\bibitem [{Note1()}]{Note1}%
  \BibitemOpen
  \bibinfo {note} {Also known as $f$-divergences.}\BibitemShut {Stop}%
\bibitem [{\citenamefont {Csisz{\'a}r}\ and\ \citenamefont
  {Shields}(2004)}]{csiszarInformationTheoryStatistics2004}%
  \BibitemOpen
  \bibfield  {author} {\bibinfo {author} {\bibfnamefont {I.}~\bibnamefont
  {Csisz{\'a}r}}\ and\ \bibinfo {author} {\bibfnamefont {P.~C.}\ \bibnamefont
  {Shields}},\ }\href {\doibase 10.1561/0100000004} {\emph {\bibinfo {title}
  {Information {{theory}} and {{statistics}}: {{a tutorial}}}}},\ Vol.~\bibinfo
  {volume} {1}\ (\bibinfo  {publisher} {{Now Publishers, Inc.}},\ \bibinfo
  {year} {2004})\ pp.\ \bibinfo {pages} {417--528}\BibitemShut {NoStop}%
\bibitem [{\citenamefont {Petz}(1986)}]{petz1986quasi}%
  \BibitemOpen
  \bibfield  {author} {\bibinfo {author} {\bibfnamefont {D.}~\bibnamefont
  {Petz}},\ }\bibfield  {title} {\enquote {\bibinfo {title} {Quasi-entropies
  for finite quantum systems},}\ }\href {\doibase 10.1016/0034-4877(86)90067-4}
  {\bibfield  {journal} {\bibinfo  {journal} {Reports on Mathematical Physics}\
  }\textbf {\bibinfo {volume} {23}},\ \bibinfo {pages} {57--65} (\bibinfo
  {year} {1986})}\BibitemShut {NoStop}%
\bibitem [{\citenamefont {Hiai}\ and\ \citenamefont
  {Petz}(2014)}]{hiaiIntroductionMatrixAnalysis2014}%
  \BibitemOpen
  \bibfield  {author} {\bibinfo {author} {\bibfnamefont {F.}~\bibnamefont
  {Hiai}}\ and\ \bibinfo {author} {\bibfnamefont {D.}~\bibnamefont {Petz}},\
  }\href {\doibase 10.1007/978-3-319-04150-6} {\emph {\bibinfo {title}
  {Introduction to matrix analysis and applications}}},\ Universitext\
  (\bibinfo  {publisher} {{Springer}},\ \bibinfo {year} {2014})\BibitemShut
  {NoStop}%
\bibitem [{\citenamefont {Lesniewski}\ and\ \citenamefont
  {Ruskai}(1999)}]{lesniewskiMonotoneRiemannianMetrics1999}%
  \BibitemOpen
  \bibfield  {author} {\bibinfo {author} {\bibfnamefont {A.}~\bibnamefont
  {Lesniewski}}\ and\ \bibinfo {author} {\bibfnamefont {M.~B.}\ \bibnamefont
  {Ruskai}},\ }\bibfield  {title} {\enquote {\bibinfo {title} {Monotone
  {{Riemannian metrics}} and {{relative entropy}} on {{non}}-{{commutative
  probability spaces}}},}\ }\href {\doibase 10.1063/1.533053} {\bibfield
  {journal} {\bibinfo  {journal} {Journal of Mathematical Physics}\ }\textbf
  {\bibinfo {volume} {40}},\ \bibinfo {pages} {5702--5724} (\bibinfo {year}
  {1999})}\BibitemShut {NoStop}%
\bibitem [{\citenamefont {Petz}(1996)}]{petzMonotoneMetricsMatrix1996}%
  \BibitemOpen
  \bibfield  {author} {\bibinfo {author} {\bibfnamefont {D.}~\bibnamefont
  {Petz}},\ }\bibfield  {title} {\enquote {\bibinfo {title} {Monotone metrics
  on matrix spaces},}\ }\href {\doibase 10.1016/0024-3795(94)00211-8}
  {\bibfield  {journal} {\bibinfo  {journal} {Linear Algebra and its
  Applications}\ }\textbf {\bibinfo {volume} {244}},\ \bibinfo {pages} {81--96}
  (\bibinfo {year} {1996})}\BibitemShut {NoStop}%
\bibitem [{\citenamefont {Paris}(2009)}]{paris2009quantum}%
  \BibitemOpen
  \bibfield  {author} {\bibinfo {author} {\bibfnamefont {Matteo G.~A.}\
  \bibnamefont {Paris}},\ }\bibfield  {title} {\enquote {\bibinfo {title}
  {Quantum estimation for quantum technology},}\ }\href {\doibase
  10.1142/s0219749909004839} {\bibfield  {journal} {\bibinfo  {journal}
  {International Journal of Quantum Information}\ }\textbf {\bibinfo {volume}
  {07}},\ \bibinfo {pages} {125–137} (\bibinfo {year} {2009})}\BibitemShut
  {NoStop}%
\bibitem [{\citenamefont {Audenaert}\ \emph {et~al.}(2007)\citenamefont
  {Audenaert}, \citenamefont {Calsamiglia}, \citenamefont {Munoz-Tapia},
  \citenamefont {Bagan}, \citenamefont {Masanes}, \citenamefont {Acin},\ and\
  \citenamefont {Verstraete}}]{audenaert2007discriminating}%
  \BibitemOpen
  \bibfield  {author} {\bibinfo {author} {\bibfnamefont {Koenraad~MR}\
  \bibnamefont {Audenaert}}, \bibinfo {author} {\bibfnamefont {John}\
  \bibnamefont {Calsamiglia}}, \bibinfo {author} {\bibfnamefont {Ram{\'o}n}\
  \bibnamefont {Munoz-Tapia}}, \bibinfo {author} {\bibfnamefont {Emilio}\
  \bibnamefont {Bagan}}, \bibinfo {author} {\bibfnamefont {Ll}~\bibnamefont
  {Masanes}}, \bibinfo {author} {\bibfnamefont {Antonio}\ \bibnamefont {Acin}},
  \ and\ \bibinfo {author} {\bibfnamefont {Frank}\ \bibnamefont {Verstraete}},\
  }\bibfield  {title} {\enquote {\bibinfo {title} {Discriminating states: The
  quantum {C}hernoff bound},}\ }\href {\doibase 10.1103/PhysRevLett.98.160501}
  {\bibfield  {journal} {\bibinfo  {journal} {Physical Review Letters}\
  }\textbf {\bibinfo {volume} {98}},\ \bibinfo {pages} {160501} (\bibinfo
  {year} {2007})}\BibitemShut {NoStop}%
\bibitem [{Note2()}]{Note2}%
  \BibitemOpen
  \bibinfo {note} {Maps that decrease the trace correspond to evolutions with
  post-selection, i.e., with a non-null probability of discarding the
  system.}\BibitemShut {Stop}%
\bibitem [{\citenamefont {Hardy}(2015)}]{hardy2015reconstructing}%
  \BibitemOpen
  \bibfield  {author} {\bibinfo {author} {\bibfnamefont {L.}~\bibnamefont
  {Hardy}},\ }\bibfield  {title} {\enquote {\bibinfo {title} {Reconstructing
  quantum theory},}\ }in\ \href {\doibase 10.1007/978-94-017-7303-4_7} {\emph
  {\bibinfo {booktitle} {Quantum theory: informational foundations and
  foils}}}\ (\bibinfo  {publisher} {Springer},\ \bibinfo {year} {2015})\ pp.\
  \bibinfo {pages} {223--248}\BibitemShut {NoStop}%
\bibitem [{\citenamefont {Mueller}\ and\ \citenamefont
  {Masanes}(2015)}]{mueller2016information}%
  \BibitemOpen
  \bibfield  {author} {\bibinfo {author} {\bibfnamefont {M.~P.}\ \bibnamefont
  {Mueller}}\ and\ \bibinfo {author} {\bibfnamefont {L.}~\bibnamefont
  {Masanes}},\ }\bibfield  {title} {\enquote {\bibinfo {title}
  {Information-theoretic postulates for quantum theory},}\ }in\ \href {\doibase
  10.1007/978-94-017-7303-4_5} {\emph {\bibinfo {booktitle} {Quantum theory:
  informational foundations and foils}}}\ (\bibinfo  {publisher} {Springer},\
  \bibinfo {year} {2015})\ pp.\ \bibinfo {pages} {139--170}\BibitemShut
  {NoStop}%
\end{thebibliography}%
	
	\newpage \ \newpage
	\onecolumngrid
	\appendix

    \section{Methods}
    
	\section{1. Proof of Thm.~\ref{cf:theo:PiffContracts} and Cor.~\ref{cor:1}}\label{app:1}
	
	In this appendix we present the proof of Thm.~\ref{cf:theo:PiffContracts}, which we repeat here for convenience:
	\begin{theorem*}
		Consider a Hermitian preserving, linear map $\Phi: \mathcal{M}_d(\mathbb{C})\rightarrow\mathcal{M}_d(\mathbb{C})$.  If $\Phi$ satisfies the following two properties: 
		\begin{enumerate}
			\item $\Phi$ maps at least one point from $ \mathcal{P}^\circ_d$ into $\mathcal{P}^\circ_d$;\label{cf:it:22}
			\item for any matrix $\rho$ in $\mathcal{P}^\circ_d\cap\norbraB{\Phi^{-1}\norbraB{ \mathcal{P}^\circ_d\cap\Phi( \mathcal{P}_d)}}$, and any tangent vector $\delta \rho$, one has:\label{cf:it:32}
			\begin{align}
				K_{f,\rho}(\delta\rho,\delta\rho) \geq
				K_{f,\Phi(\rho)}(\Phi(\delta\rho),\Phi(\delta\rho)) \,;\label{eq:contractionContrast2}
			\end{align}
		\end{enumerate}
		then the image of ${\Phi}$ is completely contained in $\mathcal{P}_d$, meaning that $\Phi$ is a positive map (P). Moreover, $\Phi$ is trace non-increasing.
	\end{theorem*}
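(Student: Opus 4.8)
The plan is to argue by contradiction, exactly as sketched in Fig.~\ref{fig:proof}. Suppose $\Phi$ satisfies the two hypotheses but fails to be positive, so there exists $\sigma \in \mathcal{P}_d$ with $\Phi(\sigma) \notin \mathcal{P}_d$. Since $\mathcal{P}_d^\circ$ is dense in $\mathcal{P}_d$ and $\Phi$ is continuous (being linear on a finite-dimensional space), and since the complement of $\mathcal{P}_d$ is open, we may take $\sigma \in \mathcal{P}_d^\circ$; moreover by rescaling we may assume $\sigma$ has trace one. Let $\pi \in \mathcal{P}_d^\circ$ be the point guaranteed by condition~\ref{cf:it:22} with $\Phi(\pi) \in \mathcal{P}_d^\circ$; again rescale so that $\tr \pi = \tr \Phi(\pi)$ matches suitably, or simply keep track of traces. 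Consider the segment $\rho_t := (1-t)\pi + t\sigma$ for $t \in [0,1]$, which lies entirely in the convex set $\mathcal{P}_d^\circ$ for $t<1$. Its image $\Phi(\rho_t) = (1-t)\Phi(\pi) + t\Phi(\sigma)$ is a straight line in $\mathcal{M}_d(\mathbb{C})$ starting inside $\mathcal{P}_d^\circ$ at $t=0$ and ending outside $\mathcal{P}_d$ at $t=1$. Hence there is a first time $t_* \in (0,1]$ at which $\Phi(\rho_{t_*})$ hits the boundary $\partial\mathcal{P}_d$, i.e.\ $\Phi(\rho_{t_*})$ is positive semidefinite but singular, while $\Phi(\rho_t) \in \mathcal{P}_d^\circ$ for all $t < t_*$.

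Next I would integrate the contraction inequality along the curve. For $t < t_*$, both $\rho_t \in \mathcal{P}_d^\circ$ and $\Phi(\rho_t) \in \mathcal{P}_d^\circ$, so $\rho_t$ lies in the domain of condition~\ref{cf:it:32}; taking the tangent vector $\delta\rho = \dot\rho_t = \sigma - \pi$ (note $\Phi(\dot\rho_t) = \Phi(\sigma)-\Phi(\pi)$, independent of $t$), inequality~\eqref{eq:contractionContrast2} gives
\begin{align}
K_{f,\rho_t}(\sigma-\pi,\sigma-\pi) \;\geq\; K_{f,\Phi(\rho_t)}\big(\Phi(\sigma-\pi),\Phi(\sigma-\pi)\big)\qquad\text{for all }t<t_*.\label{eq:plan-integrated}
\end{align}
The left-hand side stays bounded as $t \to t_*^-$: the Fisher operator $\J_f\big|_{\rho_t}$ depends continuously on $\rho_t$ and $\J_f^{-1}\big|_{\rho_t}$ is well-defined and bounded because $\rho_{t_*} \in \mathcal{P}_d^\circ$ is still strictly positive (here I use $t_* \le 1$ and $\rho_t \in \mathcal{P}_d^\circ$ up to and including $t_*$ on the \emph{domain} side). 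The right-hand side, however, must \emph{diverge} as $t \to t_*^-$, provided $\Phi(\sigma-\pi)$ is not tangent to $\partial\mathcal{P}_d$ at the point $\Phi(\rho_{t_*})$ — more precisely, provided $\Phi(\sigma-\pi)$ has a nonzero component in the direction along which $\Phi(\rho_t)$ is leaving $\mathcal{P}_d^\circ$, which is automatic since $\frac{d}{dt}\Phi(\rho_t) = \Phi(\sigma-\pi)$ is exactly the velocity with which the curve crosses the boundary. This contradicts~\eqref{eq:plan-integrated} and establishes positivity. The trace non-increasing claim then follows by a separate short argument: if $\tr\Phi(\tau) > \tr\tau$ for some $\tau \in \mathcal{S}_d^\circ$, one can run a similar segment/rescaling argument, or observe directly that a positive map that increases trace somewhere, composed appropriately, would again force a Fisher-metric expansion; alternatively one notes that on the now-established positive cone the contraction of $K_{f,\cdot}$ near a fixed point pins down the trace behaviour.

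The main obstacle I anticipate is making rigorous the divergence of the right-hand side of~\eqref{eq:plan-integrated} as $\Phi(\rho_t)$ approaches the boundary. One needs a quantitative lower bound of the form $K_{f,\omega}(X,X) \gtrsim \langle v, X\rangle^2 / \lambda_{\min}(\omega)$ (or with $1/\lambda_{\min}$ replaced by the appropriate $f$-dependent divergence rate), where $v$ is the direction normal to $\partial\mathcal{P}_d$ at the limit point and $\lambda_{\min}(\omega) \to 0$. This requires controlling $\J_f^{-1}\big|_\omega$ as $\omega$ degenerates: one should diagonalize $\omega$, write $\J_f\big|_\omega$ in the eigenbasis (its eigenvalues on the operator $|i\rangle\langle j|$ being $r_j\, f(r_i/r_j)$ with $r_i$ the eigenvalues of $\omega$), and check that the smallest eigenvalue of $\J_f\big|_\omega$ goes to zero — which holds because $f(0^+)$-type factors or the $r_j \to 0$ prefactor force $r_j f(r_i/r_j) \to 0$ on the relevant block — so that $\J_f^{-1}$ blows up. One must also verify that $X = \Phi(\sigma-\pi)$ genuinely overlaps the diverging eigenspace and does not accidentally lie in a safe subspace; this is where the geometric fact that $X$ is the crossing velocity of the curve is used. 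The remaining steps — the convexity/density reduction to interior points, continuity of $\Phi$, existence of the first crossing time $t_*$ — are routine.
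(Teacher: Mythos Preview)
Your positivity argument is essentially the paper's: contradiction via a segment from a good base point $\pi$ to a bad $\sigma$, boundedness of the Fisher metric along the segment on the domain side, and divergence on the image side as $\Phi(\rho_t)$ approaches $\partial\mathcal{P}_d$. One difference worth flagging: you fix the tangent vector once and for all as $\delta\rho=\sigma-\pi$, whereas the paper allows an arbitrary $\delta\rho_\eta$ and then spends a separate paragraph proving that \emph{some} perturbation has nonzero overlap with the degenerating direction. Your choice is actually cleaner, but your justification (``$X$ is the crossing velocity, so it overlaps the diverging eigenspace'') is only a heuristic. The rigorous one-liner you are missing is: if $\ket{\psi_0}$ lies in the kernel of $\Phi(\rho_{t_*})=\Phi(\pi)+t_*\Phi(\sigma-\pi)$, then
\begin{align}
\bra{\psi_0}\Phi(\sigma-\pi)\ket{\psi_0}=-\tfrac{1}{t_*}\bra{\psi_0}\Phi(\pi)\ket{\psi_0}<0,
\end{align}
since $\Phi(\pi)\in\mathcal{P}_d^\circ$ is strictly positive. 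By continuity the same holds for $\ket{\psi_\eta}$ near $t_*$, so the diagonal term $|\Phi(\sigma-\pi)_{00}|^2/\Phi(\rho_\eta)_{00}$ in the eigenbasis of $\Phi(\rho_\eta)$ genuinely diverges like $\eta^{-1}$. Without this line, your ``not tangent to $\partial\mathcal{P}_d$'' claim is not established, and for some $f$ (e.g.\ the Bures choice $f(x)=(1+x)/2$) the off-diagonal Fisher terms alone do \emph{not} diverge, so the diagonal overlap really is what carries the argument.

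The trace non-increasing part, however, is a genuine gap: none of your three sketches is a proof, and ``segment/rescaling'' or ``composed appropriately'' does not obviously produce a contradiction. The paper's argument is a one-line computation you have overlooked: take $\delta\rho=\rho$ itself as the tangent vector in Eq.~\eqref{eq:contractionContrast2}. Using the identity $\J_f^{-1}\big|_\rho[\rho]=\idO$ (valid for every $f$ in the family), one has $K_{f,\rho}(\rho,\rho)=\Tr{\rho\,\J_f^{-1}\big|_\rho[\rho]}=\Tr{\rho}$ and likewise $K_{f,\Phi(\rho)}(\Phi(\rho),\Phi(\rho))=\Tr{\Phi(\rho)}$, so the contraction hypothesis immediately yields $\Tr{\rho}\geq\Tr{\Phi(\rho)}$.
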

	
	\begin{proof}
		
		We prove Theorem~\ref{cf:theo:PiffContracts}
		by contradiction: suppose there exists a map $\Phi$ that is not positive, but that satisfies Eq.~\eqref{eq:contractionContrast2}. Thanks to condition~\ref{cf:it:22} there exists at least one point $\pi$ in $\mathcal{P}^\circ_d$ such that its evolution is also in the interior of $\mathcal{P}_d$. Moreover, from the assumption that $\Phi$ is not P there is also at least one matrix $\sigma$ such that ${\Phi}(\sigma)\notin \mathcal{P}_d$. Without loss of generality one can choose $\sigma$ to be in the interior of $ \mathcal{P}_d$: if this is not the case we take a ball around $\sigma$ small enough so that its image still lays outside of the state space. Then, by inspecting the intersection between the preimage of this ball and $ \mathcal{P}^\circ_d$ one can find a point satisfying the assumption. Since both $\rho$ and $\sigma$ are in $\mathcal{P}^\circ_d$, the line $\rho_\lambda := (1-\lambda)\,\pi + \lambda \,\sigma$ is completely contained in $\mathcal{P}^\circ_d$ for $\lambda\in[0,1]$. This implies that the following superior is finite:
		\begin{align}
			\sup_{\lambda, \Tr{\delta\rho^2} = 1}&\,K_{f,\rho_\lambda}(\delta\rho,\delta\rho)  = \sup_{\lambda, \Tr{\delta\rho^2} = 1} \, \Tr{\delta\rho\,\J_f^{-1}\big|_{\rho_\lambda} [\delta\rho]} < \infty\,,
		\end{align}
		since the Fisher information is a bounded operator when restricted to a closed set completely inside of $\mathcal{P}_d^\circ$ (see~\cite{scandi2023quantum}). 
		By varying $\lambda$, ${\Phi}(\rho_{\lambda})$ interpolates linearly between the positive definite matrix ${\Phi}(\pi)$ and one with at least one negative eigenvalue, namely ${\Phi}(\sigma)$. Then, there exists a $\lambda^*$ such that ${\Phi}(\rho_{\lambda^*})$ is a matrix with at least one zero-eigenvalue. We are now ready to prove the claim. Set the state $\rho_\eta$ so that the smallest eigenvalue $\epsilon_\eta$ of ${\Phi}(\rho_\eta)$ is of order $\eta$, where $\eta\ll 1$. Call the associated eigenvector $\ket{\psi_\eta}$. Then, choose a perturbation $\delta \rho_\eta$ such that, for any $\eta$, $\Phi(\delta \rho_\eta)$ has a positive and finite contribution along the eigenvectors corresponding to $\eta$-eigenvalues (the positivity condition ensures that in the limit $\eta\rightarrow0$ the perturbed state is still in the interior of $ \mathcal{P}^\circ_d$ for any finite $\eta$, allowing for the definition of the Fisher information). One can always find such a perturbation. Suppose, in fact, the contrary. Then, this would mean that for any $\delta\rho$, it holds that:
		\begin{align}
			\lim_{\eta\rightarrow 0} \bra{\psi_\eta}\Phi(\delta\rho)\ket{\psi_\eta}=0 \quad \forall\,\delta\rho.\label{eq:20}
		\end{align}
		But any $\delta\rho$ can always be rewritten as $\delta\rho=\sigma_1-\sigma_2$, where $\sigma_{1/2}\in\mathcal{P}_d$. Then, Eq.~\eqref{eq:20} would imply that for any two matrices in $\mathcal{P}_d$ it holds that:
		\begin{align}
			\bra{\psi_0}\Phi(\sigma_1)\ket{\psi_0}=\bra{\psi_0}\Phi(\sigma_2)\ket{\psi_0} \quad \forall\sigma_1,\sigma_2\in \mathcal{P}_d\;.\label{eq:17A}
		\end{align}
		Set $\sigma_2:=\rho_{\lambda^*}$,  which, by definition, satisfies $\bra{\psi_0}\Phi(\rho_{\lambda^*})\ket{\psi_0}=0$. This implies that for any matrix in $\sigma\in\mathcal{P}_d$, the image of $\sigma$ is not full-rank, since $\bra{\psi_0}\Phi(\sigma)\ket{\psi_0}=0 $, which in turns contradicts the hypothesis that at least one point in $\mathcal{P}^\circ_d$ is mapped inside $\mathcal{P}^\circ_d$. Thus, there exists at least one $\delta\rho_\eta$ for which Eq.~\eqref{eq:20} does not hold. Then, choosing such a vector, one can see that
		the evolved Fisher information is unbounded as $\eta\rightarrow0$ by explicit inspection of its coordinates expression:
		\begin{align}
			K_{f,{\Phi}(\rho_\eta)}(\Phi(\delta\rho_\eta),\Phi(\delta\rho_\eta)) = \sum_{i\neq j} \; \frac{|\Phi(\delta\rho_\eta)_{ij}|^2}{\Phi(\rho_\eta)_{jj} f\norbraB{\Phi(\rho_\eta)_{ii}/\Phi(\rho_\eta)_{jj}}}  + \sum_i \frac{|\Phi(\delta\rho_\eta)_{ii}|^2}{\Phi(\rho_\eta)_{ii}}
		\end{align}
		where we used the formula for the Fisher information in the orthonormal basis of $\Phi(\rho_\eta)$ provided in~\cite{scandi2023quantum}.
		All the terms in the two sums are positive. Moreover, the second sum diverges at least as $\bigo{\eta^{-1}}$ as $\eta\rightarrow0$,  due to the contribution associated to the eigenvalue of $\ket{\psi_\eta}$. Hence, we can always find a $\eta$ small enough such that:
		\begin{align}
			K_{f,{\Phi}(\rho_\eta)}({\Phi}(\delta\rho_\eta),{\Phi}(\delta\rho_\eta))
			> \sup_{\lambda, \Tr{\delta\rho^2} = 1}\,K_{f,\rho_\lambda}(\delta\rho,\delta\rho)
			\geq
			K_{f,\rho_\eta}(\delta\rho_\eta,\delta\rho_\eta) \;,
		\end{align}
		contradicting the assumption that ${\Phi}$ contracts the Fisher metric for any two points in $\mathcal{P}_d^\circ$. This  proves the Positivity of $\Phi$. Moreover, choosing $\delta\rho\equiv\rho$ in Eq.~\eqref{eq:contractionContrast2}, one obtains:
		\begin{align}
			\Tr{\rho}= K_{f,\rho}(\delta\rho,\delta\rho) \geq
			K_{f,\Phi(\rho)}(\Phi(\delta\rho),\Phi(\delta\rho)) =\Tr{\Phi(\rho)}\,,
		\end{align}
		thanks to the property that $\J_f^{-1}\big|_\rho[\rho] = \idO$~\cite{scandi2023quantum}. This proves that $\Phi$ is trace non-increasing.
        \end{proof}
		
		Finally, regarding Cor.~\ref{cor:1}, it should be noticed that the only difference is that the tangent space of $\mathcal{S}_d$ is given by traceless Hermitian operators. Thus, whereas tangent vectors to $\mathcal{P}_d$ can be written as $\delta\rho=\sigma_1-\sigma_2$ for $\sigma_{1/2}\in\mathcal{P}_d$, in the case of $\mathcal{S}_d$ the same decomposition still holds, but with $\sigma_{1/2}\in\mathcal{S}_d$. Then, one can prove the existence of the perturbation $\delta\rho_\eta$ in the very same way as we did for $\mathcal{P}_d$. 
	
	\section{2. Proof of Corollary~\ref{cor:2}}\label{app:2}
	We prove Corollary~\ref{cor:2}, which we repeat here for convenience:
	\begin{corollary*}\emph{
		Consider a  Hermitian preserving, linear map $\Phi: \mathcal{M}_d(\mathbb{C})\rightarrow\mathcal{M}_d(\mathbb{C})$ satisfying condition~\ref{cf:it:2} in Thm.~\ref{cf:theo:PiffContracts}. Consider any contrast function $H_g (\rho||\sigma) $. If for every two matrices $\rho$ and $\sigma$ on $\mathcal{P}_d^o$ it holds that:
		\begin{align}
			H_g (\rho||\sigma) \geq H_g (\Phi(\rho)||\Phi(\sigma))\,,
		\end{align}
		then $\Phi$ is a positive map. Moreover, if the same holds for $\Phi\otimes\idO_d$ then the map is completely positive.}
	\end{corollary*}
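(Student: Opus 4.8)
The plan is to reduce Corollary~\ref{cor:2} to Theorem~\ref{cf:theo:PiffContracts} by localising the contrast-function inequality around the diagonal $\rho=\sigma$ and invoking the Lesniewski--Ruskai expansion (Theorem~\ref{cf:thm:Ruskai}). Concretely, I would show that the assumed contractivity of $H_g$ forces condition~\ref{cf:it:3} of Theorem~\ref{cf:theo:PiffContracts}; since condition~\ref{cf:it:2} is already part of the hypothesis, positivity of $\Phi$ follows at once, and complete positivity from the same argument applied to $\Phi\otimes\idO_d$.

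For the core step, fix any $\rho\in\mathcal{P}^\circ_d$ with $\Phi(\rho)\in\mathcal{P}^\circ_d$ --- precisely the matrices for which both Fisher forms, and hence the right-hand divergence, are defined --- together with an arbitrary Hermitian tangent vector $\delta\rho$. Since $\mathcal{P}^\circ_d$ is open and $\Phi$ is linear, there is $\varepsilon_0>0$ such that for $0<\varepsilon<\varepsilon_0$ both $\sigma_\varepsilon:=\rho+\varepsilon\,\delta\rho$ and $\Phi(\sigma_\varepsilon)=\Phi(\rho)+\varepsilon\,\Phi(\delta\rho)$ lie in $\mathcal{P}^\circ_d$; moreover the four matrices $\rho,\sigma_\varepsilon,\Phi(\rho),\Phi(\sigma_\varepsilon)$ stay inside one fixed compact subset of $\mathcal{P}^\circ_d$. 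Applying the hypothesis to the pair $(\rho,\sigma_\varepsilon)$ and expanding both divergences with Theorem~\ref{cf:thm:Ruskai} --- taking $\pi=\rho,\ A=0,\ B=\delta\rho$ on the left and $\pi=\Phi(\rho),\ A=0,\ B=\Phi(\delta\rho)$ on the right --- yields
\begin{align}
  \frac{\varepsilon^2}{2}\,K_{f,\rho}(\delta\rho,\delta\rho)+\bigo{\varepsilon^3}\ \geq\ \frac{\varepsilon^2}{2}\,K_{f,\Phi(\rho)}\!\left(\Phi(\delta\rho),\Phi(\delta\rho)\right)+\bigo{\varepsilon^3}\,.
\end{align}
Dividing by $\varepsilon^2/2$ and sending $\varepsilon\to0$ gives $K_{f,\rho}(\delta\rho,\delta\rho)\geq K_{f,\Phi(\rho)}(\Phi(\delta\rho),\Phi(\delta\rho))$, which is exactly condition~\ref{cf:it:3}; hence $\Phi$ is positive by Theorem~\ref{cf:theo:PiffContracts}. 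For the complete-positivity clause I would apply this to $\Phi\otimes\idO_d$ on $\mathcal{M}_d(\mathbb{C})\otimes\mathcal{M}_d(\mathbb{C})$: condition~\ref{cf:it:2} holds there because, with $\pi$ the interior point supplied by condition~\ref{cf:it:2} for $\Phi$ and any $\tau\in\mathcal{P}^\circ_d$, the matrix $(\Phi\otimes\idO_d)(\pi\otimes\tau)=\Phi(\pi)\otimes\tau$ is positive definite, while contractivity of $H_g$ for $\Phi\otimes\idO_d$ is the stated hypothesis; positivity of $\Phi\otimes\idO_d$ is precisely complete positivity of $\Phi$.

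I expect the only genuinely delicate points to be two regularity issues, both minor. First, one must check that the $\bigo{\varepsilon^3}$ remainders are uniform enough to survive division by $\varepsilon^2$ before letting $\varepsilon\to0$; this follows from Taylor's theorem for the $C^\infty$ function $h_{\rho,\sigma}(x,y)$ of the differentiability axiom~\ref{cf:item:differentiability}, using boundedness of the relevant third derivatives over the fixed compact subset of $\mathcal{P}^\circ_d$ containing $\rho,\sigma_\varepsilon,\Phi(\rho),\Phi(\sigma_\varepsilon)$. Second, Theorem~\ref{cf:thm:Ruskai} is phrased for \emph{traceless} perturbations, whereas $\Phi(\delta\rho)$ need not be traceless (as $\Phi$ is not assumed trace-preserving); this is harmless, since the quadratic form there arises from a plain second-order Taylor expansion of $H_g$ at a diagonal point, which uses no tracelessness and is valid for arbitrary Hermitian perturbations. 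Once these are dispatched the argument is complete: no ingredient beyond Theorems~\ref{cf:thm:Ruskai} and~\ref{cf:theo:PiffContracts} is required, the content of the corollary being exactly that every contrast function reproduces a Fisher metric infinitesimally.
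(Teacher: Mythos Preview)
Your proposal is correct and uses the same essential idea as the paper: localise the contrast-function inequality via Theorem~\ref{cf:thm:Ruskai} to recover Fisher-metric contractivity, then invoke the positivity argument. The only organisational difference is that you carry out a clean black-box reduction to Theorem~\ref{cf:theo:PiffContracts} (verify condition~\ref{cf:it:3}, then apply), whereas the paper's appendix re-runs the contradiction argument of Theorem~\ref{cf:theo:PiffContracts} directly in terms of $H_g$; your route is in fact the one sketched in the main text, and the two regularity caveats you flag (uniformity of the $\bigo{\varepsilon^3}$ remainder and the tracelessness hypothesis in Theorem~\ref{cf:thm:Ruskai}) are handled appropriately---the paper's own proof implicitly relies on the same extension to non-traceless perturbations.
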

	\begin{proof}
		The proof of the corollary directly follows from the one of Thm.~\ref{cf:theo:PiffContracts}. Indeed, by contradiction, assume there exists a matrix $\sigma\in\mathcal{P}^\circ_d$ that  gets mapped outside of $\mathcal{P}_d$, and consider $\pi\in\mathcal{P}^\circ_d$ such that $\Phi(\pi)\in\mathcal{P}^\circ_d$. Analogously to the proof above, define $\rho_\lambda := (1-\lambda)\,\pi + \lambda \,\sigma$. For $\varepsilon$ small enough we can apply Thm.~\ref{cf:thm:Ruskai} to obtain:
		\begin{align}
			\sup_{\lambda, \Tr{\delta\rho^2} = 1}\,H_g(\rho_\lambda ||\rho_\lambda+\varepsilon\,\delta\rho) = \sup_{\lambda, \Tr{\delta\rho^2} = 1}\,\frac{\varepsilon^2}{2}\,K_{f,\rho_\lambda}(\delta\rho,\delta\rho)  +\bigo{\varepsilon^3}< \infty\,.
		\end{align}
		Then, following the steps presented for Thm.~\ref{cf:theo:PiffContracts}, we define a $\rho_\eta$ such that $\Phi(\rho_\eta)$ is of order $\eta$, where $\eta\ll 1$, and a $\delta\rho_\eta$ having  a positive finite contribution along the eigenvectors corresponding to $\eta$-eigenvalues. The positivity condition ensures that in the limit $\eta\rightarrow0$ the perturbed state is still in the interior of $ \mathcal{P}_d$ for any finite $\eta$, and $\varepsilon$ small enough. Then, $H_g(\Phi(\rho_\eta) ||\Phi(\rho_\eta+\varepsilon\,\delta\rho_\eta))$ scales as $\varepsilon^2/\eta$. Hence, one can find $\eta$ small enough such that:
		\begin{align}
			H_g(\Phi(\rho_\eta) ||\Phi(\rho_\eta+\varepsilon\,\delta\rho_\eta)) > \sup_{\lambda, \Tr{\delta\rho^2} = 1}\,H_g(\rho_\lambda ||\rho_\lambda+\varepsilon\,\delta\rho) \geq \;H_g(\rho_\eta ||\rho_\eta+\varepsilon\,\delta\rho_\eta) \,.
		\end{align}
		This gives the desired contradiction, concluding the proof.
	\end{proof}
	
\end{document}